\newcommand{\R}{{\mathbb R}}
\newcommand{\N}{{\mathcal{N}}}
\newtheorem{theorem}{Theorem}
\newtheorem{assumption}{Assumption}
\newtheorem{lemma}{Lemma}
\newtheorem{definition}{Definition}
\newtheorem{corollary}{Corollary}
\title{An Integrated Design of Optimization and Physical Dynamics for Energy Efficient Buildings: A Passivity Approach}
\author{Takeshi Hatanaka, Xuan Zhang, Wenbo Shi, Minghui Zhu and Na Li
\thanks{T. Hatanaka is with School of Engineering, Tokyo Institute of Technology, Tokyo 152-8552, JAPAN
(hatanaka@ctrl.titech.ac.jp).
X. Zhang, W. Shi and N. Li are
with Electrical Engineering and Applied Mathematics of the School of Engineering and Applied Sciences,
Harvard Univ., 33 Oxford St, Cambridge, MA 02138, USA.
M. Zhu is with Department of Electrical Engineering,
Pennsylvania State Univ.,
University Park, PA 16802, USA.
}}
\begin{document}
\maketitle
\thispagestyle{empty}
\pagestyle{empty}

\begin{abstract}
In this paper, we address energy management for
heating, ventilation, and air-conditioning (HVAC) systems in buildings, 
and present a novel combined optimization and control approach.
We first formulate a thermal dynamics and an associated optimization problem. 
An optimization dynamics is then designed based on a standard primal-dual algorithm, and its 
strict passivity is proved.
We then design a local controller and prove that the physical dynamics
with the controller is ensured to be passivity-short.
Based on these passivity results, we interconnect the optimization and
physical dynamics, and prove convergence of the room temperatures to
the optimal ones defined for unmeasurable disturbances.
Finally, we demonstrate the present algorithms through simulation.
\end{abstract}

\section{Introduction}

Stimulated by strong needs for reducing energy consumption of buildings,
smart building energy management algorithms have been developed both in industry and academia.
In particular, about half of the current consumption is known to be occupied by
heating, ventilation, and air-conditioning (HVAC) systems, and a great deal of works
have been devoted to HVAC optimization and control \cite{survey}.
In this paper, we address the issue based on a novel approach
combining optimization and physical dynamics.

Interplays between optimization and physical dynamics 
have been most actively studied in
the field of Model Predictive Control (MPC), 
which has also been applied to building HVAC control 
\cite{survey}--\cite{MPC4}.
While the MPC approach regards the optimization process as a static map from
physical states to optimal inputs, 
another approach to integrating optimization and physical dynamics
is presented in \cite{anu}--\cite{SPV_16} 
mainly motivated by power grid control. 
There, the solution process of the optimization is viewed as a dynamical system,
and the combination of optimization and physical dynamics is
regarded as an interconnection of dynamical systems.
The benefits of the approach relative to MPC are as follows. 
First, the approach allows one to avoid complicated modeling and prediction 
of factors hard to know in advance,
while MPC needs their models to predict future system evolutions.
Second, since the entire system is a dynamical system, its stability and performance
are analyzed based on unifying dynamical system theory. 

%

In this line of works, 
Shiltz et al. \cite{anu} addresses smart grid control,
and interconnects a dynamic optimization process
and a locally controlled grid dynamics. 
The entire process is then demonstrated through simulation.
The authors of \cite{TAC14, ATN14} incorporate the grid
dynamics into the optimization process by
identifying the physical dynamics with a subprocess of seeking the optimal solution.
A scheme to eliminate structural constraints required in \cite{TAC14, ATN14}
is presented by Zhang et al. \cite{CDC15} while instead assuming measurements of disturbances.
A similar approach is also taken for power grid control
in Stegink et al. \cite{SPV_16}.

In this paper, we address integrated design of optimization and physical
dynamics for HVAC control based on passivity, where we regard the optimization process as 
a dynamical system similarly to \cite{anu}--\cite{SPV_16}.
Interconnections of such dynamic HVAC optimization with a building dynamics are partially studied in 
\cite{Xuan}, where temperature data for all zones and their derivatives in the physics side
are fed back to the dynamic optimization process to recover the disturbance terms.
However, such data are not always available in practical systems.
We thus present an architecture relying only on temperature data
of a subgroup of zones with HVAC systems.

The contents of this paper are as follows.
A thermal dynamic model with unmeasurable disturbances and an associated optimization problem are first presented.
We then 
formulate an optimization dynamics based on the primal-dual gradient algorithm \cite{cortes}.
The designed dynamics is then proved to be strictly passive from a transformed disturbance estimate
to an estimated optimal room temperature.
We next design a controller so that the actual room temperature tracks a given reference, 
and produces a disturbance estimate.
Then, the physical dynamics is proved to be passivity-short from the reference to the disturbance estimate.
From these two passivity-related results, we then interconnect the 
optimization and physical dynamics, and prove convergence of the actual room temperature
to the optimal solution defined for the unmeasurable actual disturbance.
Finally, the presented algorithm is demonstrated through simulation.

\section{Problem Settings}
\label{sec:2}

\subsection{Preliminary}

In this section, we introduce the concept of passivity.
Consider a system with a state-space representation
\begin{eqnarray}
\dot x = \phi(x,u),\ \ y = \varphi(x,u),
\label{eq:0.0}
\end{eqnarray}
where $x(t)\in \R^n$ is the state, 
$u(t) \in \R^p$ is the input and $y(t)\in \R^p$ is the output.
Then, passivity is defined as below.
\begin{definition}
The system (\ref{eq:0.0}) is said to be passive if there exists a 
positive semi-definite
function $S: \R^n \to \R_+ := [0, \infty)$, 
called storage function, such that 
\begin{eqnarray}
S(x(t)) - S(x(0)) \leq \int^{t}_0 y^T(\tau)u(\tau)d\tau
\label{eq:0.1}
\end{eqnarray}
holds for all inputs $u:[0, t]\to \R^p$, 
all initial states $x(0)\in \R^n$ and all $t \in \R^+$.
In the case of the static system $y = \varphi(u)$,
it is passive if $y^Tu = \varphi^T(u)u \geq 0$ for all $u\in \R^p$.
The system (\ref{eq:0.0}) is also said to be output feedback passive with index $\varepsilon > 0$ 
if (\ref{eq:0.1}) is replaced by
\begin{eqnarray}
S(x(t)) - S(x(0)) \leq \int^{t}_0 y^T(\tau)u(\tau) - \varepsilon \|y(\tau)\|^2 d\tau
\label{eq:0.3}
\end{eqnarray}
If the right-hand side of (\ref{eq:0.3}) is changed as
\begin{eqnarray}
S(x(t)) - S(x(0)) \leq \int^{t}_0 y^T(\tau)u(\tau) + \varepsilon \|u(\tau)\|^2 d\tau
\label{eq:0.4}
\end{eqnarray}
with $\varepsilon > 0$, the system is said to be passivity-short, and then $\varepsilon$
is called impact coefficient \cite{Qu}.
\end{definition}


\subsection{System Description}

In this paper, we consider a building with multiple zones $i = 1,2,\dots, n$.
The zones $i=1,2,\dots, n$ are divided into two groups:
The first group consists of zones equipped with VAV (Variable Air Volume) HVAC systems whose thermal dynamics is assumed to be 
modeled by the RC circuit model \cite{MPC1}--\cite{MPC4}, \cite{Xuan} as 
\begin{eqnarray}
C_i \dot T_i = \frac{T^{\rm a} - T_i}{R_i} + \sum_{j\in \N_i}\frac{T_j-T_i}{R_{ij}}+a_{i}(T^{\rm s}_i - T_i)m_i+q_i,
\label{eqn:1.1}
\end{eqnarray}
where $T_i$ is the temperature of zone $i$, $m_i$ is the mass flow rate
at zone $i$, $T^{\rm a}$ is the ambient temperature,
$T^{\rm s}_i$ is the air temperature supplied to zone $i$, 
which is treated as a constant throughout this paper,
$q_i$ is the heat gain at zone $i$ from external sources like occupants,
$C_i$ is the thermal capacitance, $R_i$ is the thermal resistance of
the wall/window, $R_{ij}$ is the thermal resistance between
zone $i$ and $j$ and $a_i$ is the specific heat of the air.

The second group is composed of other spaces such as walls and windows
whose dynamics is modeled as
\begin{eqnarray}
C_i \dot T_i = \frac{T^{\rm a} - T_i}{R_i} + \sum_{j\in \N_i}\frac{T_j-T_i}{R_{ij}}.
\label{eqn:1.2}
\end{eqnarray}
Rooms not in use can be categorized into this group.
Without loss of generality, we assume that $i = 1,2,\dots, n_1$ 
belong to the first group and $i = n_1+1, \dots, n\ (n_2 := n - n_1)$ belong to the second.
Remark that the system parameters $C_i$, $R_i$, and $R_{ij}$ can be identified
using the toolbox in \cite{toolbox}.

The collective dynamics of  (\ref{eqn:1.1}) and  (\ref{eqn:1.2}) is described as
\begin{eqnarray}
C\dot T = R T^{\rm a}{\bf 1} - RT - LT + BG(T)m + Bq,
\label{eqn:1.6}
\end{eqnarray}
where $T$, $q$ and $m$ are collections of 
$T_i\ (i=1,2,\dots, n)$, $q_i\ (i=1,2,\dots, n_1)$,
and $m_i\ (i=1,2,\dots, n_1)$ respectively.
The matrices $C$ and $R$ are diagonal matrices with diagonal elements
$C_i\ (i=1,2,\dots, n)$ and $\frac{1}{R_i}\ (i=1,2,\dots, n)$, respectively.
The matrix $L$ describes the weighted graph Laplacian
with elements $\frac{1}{R_{ij}}$, $G(T)\in \R^{n_1\times n_1}$ is a block diagonal matrix with diagonal elements
equal to $a_i(T^{\rm s}_i -T_i)\ (i=1,2,\dots,n_1)$,
${\bf 1}$ is the $n$-dimensional real vector whose elements are all 1,
and $B = [I_{n_1}\ 0]^{\top}\in \R^{n\times n_1}$.

We next linearize the model
at around an equilibrium as 
\begin{eqnarray}
C\dot{\delta T} = R \delta T^{\rm a}{\bf 1} - R\delta T - L\delta T + B G(\bar T)\delta m - \bar U\delta T+ B\delta q,
\nonumber
\end{eqnarray}
where $\delta T$, $\delta m$, $\delta T^{\rm a}$ and $\delta q$ describe the errors from the equilibrium states and inputs and 
$\bar U \in \R^{n\times n}$
is a diagonal matrix whose diagonal elements are $\bar m_1,\dots, \bar m_{n_1},0,\dots, 0$, 
where $\bar m_i$ is the $i$-th element of the equilibrium input $\bar m$.

Using the variable transformations 
\begin{eqnarray}
x \!\!&\!\!:=\!\!&\!\! C^{1/2}\delta T,\ 
u := B^{\top}C^{-1/2}BG(\bar T)\delta m, 
\nonumber\\
w_{\rm a} \!\!&\!\!:=\!\!& C^{-1/2}R \delta T^{\rm a}{\bf 1},\ w_{\rm q} := B^{\top}C^{-1/2}B\delta q,
\label{eqn:vt}
\end{eqnarray}
(\ref{eqn:1.2}) is rewritten as
\begin{eqnarray}
\dot x  = -A x + B u + Bw_{\rm q} + w_{\rm a},\ x := \begin{bmatrix}
x_1\\
x_2
\end{bmatrix}
\label{eqn:1.4}
\end{eqnarray}
where $A := C^{-1/2}(R+L + \bar U)C^{-1/2}$, $x_1 \in \R^{n_1}$ and $x_2 \in \R^{n_2}$.
Remark that the matrix $A$ is positive definite \cite{Hong}.

From control engineering point of view, $x$ is the system state, 
$u$ is the control input, and $w_{\rm q}$ and $w_{\rm a}$ are 
disturbances.
We suppose that $w_{\rm a}$ is measurable as well as $x_1$.
Meanwhile, it is in general hard to measure the heat gain $w_{\rm q}$.




\subsection{Optimization Problem}

Regarding the above system, we formulate the optimization problem to be solved as follows:
\begin{subequations}
\label{eqn:3.1}
\begin{eqnarray}
\hspace{-1cm}
&&\min_{z=[z_x^{\top}\ z_u^{\top}]^{\top}\in \R^{n+n_1}}  \|z_{x1} - h\|^2 + f(z_{u}) 
\label{eqn:3.1a}\\
\hspace{-1cm}
&&\mbox{subject to: }  g (z_u)  \leq 0
\label{eqn:3.1b}\\
\hspace{-1cm}
&&-Az_{x} + B z_{u} + Bd_{\rm q} + d_{\rm a} = 0
\label{eqn:3.1c}
\end{eqnarray}
\end{subequations}
The variables $z_x := [z_{x1}^{\top}\ z_{x2}^{\top}]^{\top}\ (z_{x1} \in \R^{n_1}, z_{x2} \in \R^{n_2})$, 
and $z_u\in \R^{n_1}$ correspond to the zone temperature 
$x$ and mass flow rate $u$ after the transformation (\ref{eqn:vt}).
The parameters $d_{\rm q}\in \R^{n_1}$ and $d_{\rm a}\in \R^n$
are DC components of the disturbances $w_{\rm q}$ and $w_{\rm a}$, respectively.
These variables are coupled by (\ref{eqn:3.1c}) which describes the stationary equation of (\ref{eqn:1.4}).

Throughout this paper, we assume the following assumption.

\begin{assumption}
\label{ass:0}
The problem (\ref{eqn:3.1}) satisfies the following properties:
(i) $f: \R^{n_1}\to \R$ is convex and its gradient is locally Lipschitz,
(ii) every element of the constraint function $g: \R^{n_1}\to \R^{c}$ is
convex and its gradient is locally Lipschitz, and 
(iii) there exists $z_u\in \R^{n_1}$ such that $g(z_u) < 0$.
\end{assumption}

The first term of (\ref{eqn:3.1a}) evaluates
the human comfort, where $h\in \R^{n_1}$ is the collection of the most comfortable temperatures for occupants in each zone,
which might be determined directly by occupants in the same way as the current systems, or computed 
using human comfort metrics like PMV (Predicted Mean Vote).
The quadratic function for the error 
is commonly employed in the MPC papers
\cite{MPC2,64}--\cite{72} and \cite{Xuan}.
Note that it is common to put weights on each element of $z_{x1} - h$
to give priority to each zone, but this can be done by appropriately scaling
each element of $z_u$ in the function $f$.
This is why we take (\ref{eqn:3.1a}).

The function $f$ is introduced to reduce power consumption.
The papers \cite{MPC2,64}--\cite{72} simply take a linear or quadratic function
of control efforts as such a function and then Assumption \ref{ass:0}(i) is trivially satisfied.
Also, as mentioned in \cite{ma,Xuan}, the power consumption of supply fans
is approximated by the cube of the sum of the mass flow rates, which
also satisfies Assumption \ref{ass:0}(i).
A simple model of consumption at the cooling coil is given by 
the product of the mass flow rate $m_i$ and $|T_i^{\rm s} - T^{\rm a}|$ \cite{BE14}, which also belongs to the intended class
\footnote{For simplicity, we skip the dependence of $d_{\rm a}$ on $f$, but
subsequent results are easily extended to the case that $f$ depends on $d_{\rm a}$.}.
The constraint function $g: \R^{n_1}\to \R^c$ reflects hardware constraints
and/or an upper bound of the power consumption.
For example, the constraints in \cite{Xuan} are reduced to the above form.

The objective of this paper is to design a controller so as to
ensure convergence of the actual room temperature $x_1$
to the optimal room temperature $z_{x1}^*$, the solution to (\ref{eqn:3.1}),
without direct measurements of $w_{\rm q}$.

\section{Optimization Dynamics}
\label{sec:3}

\subsection{Optimization Dynamics}

In this subsection, we present a dynamics to solve the above optimization problem.
Before that, we eliminate $z_x$ from (\ref{eqn:3.1}) using (\ref{eqn:3.1c}).
Then, the problem is rewritten as
\begin{subequations}
\label{eq:3.3}
\begin{align}
&\min_{z_u\in \R^{n_1}}  \|
B^{\top}A^{-1}(Bz_{u} + Bd_{\rm q} + d_{\rm a} - \bar h)\|^2 + f(z_{u}) 
\label{eqn:3.3a}\\
&\mbox{subject to: }  g (z_u)  \leq 0 
\label{eqn:3.3b}
\end{align}
\end{subequations}
with $\bar h = ABh$.
It is easy to confirm from positive definiteness of $A$ that 
the cost function of (\ref{eq:3.3}) is strongly convex.
In this case, (\ref{eq:3.3}) has the unique optimal solution, denoted by $z^*_u$, and it
satisfies the following KKT conditions \cite{boyd}.
\begin{subequations}
\label{eqn:3.5}
\begin{align}
&M^2(z_u^* + d_{\rm q})  + N(d_{\rm a} - \bar h) 
+ \nabla f(z_u^*) + \nabla g(z_u^*) \lambda^* = 0,
\label{eqn:3.5a}\\
&g( z_u^*) \leq 0,\ \lambda^* \geq 0,\ \lambda^*\circ g(z_u^*) = 0,
\label{eqn:3.5b}
\end{align}
\end{subequations}
where $M = B^{\top}A^{-1}B$, $N := B^{\top}A^{-1}BB^{\top}A^{-1}$, and $\circ$ represents the Hadamard product.
Since  (\ref{eq:3.3}) is essentially equivalent to (\ref{eqn:3.1}), the solution to (\ref{eq:3.3}) also provides
a solution for the problem (\ref{eqn:3.1}).
Precisely, if we define $z_x^* \in \R^n$ as
$z_x^* := A^{-1}(Bz_u^*+Bd_{\rm q}+d_{\rm a})$,
the pair $z^* := [(z_x^*)^{\top}\ (z_u^*)^{\top}]^{\top}$ is a solution to  (\ref{eqn:3.1}).
In the sequel, we also use the notation $z_x^* := [(z_{x1}^*)^{\top}\ (z_{x2}^*)^{\top}]^{\top}$ ($z_{x1}^*\in \R^{n_1}$ and $z_{x2}^*\in \R^{n_2}$).
It is then easy to confirm
\begin{align}
z^*_{x1} = M(z_u^* + d_{\rm q}) + B^{\top}A^{-1}d_{\rm a}.
\label{def:zx1}
\end{align}

Given $d_{\rm q}$ and $d_{\rm a}$, it would be easy to solve (\ref{eqn:3.5}).
However, in the practical applications, 
it is desired that $d_{\rm q}$ and $d_{\rm a}$ are updated in real time according to
the changes of disturbances.
In this regard, it is convenient to take a dynamic solution process of optimization
since it trivially allows one to update the parameters in real time.
In particular, we employ the primal-dual gradient algorithm \cite{cortes}
as one of such solutions\footnote{Another benefit of using the dynamic solution is that it provides
a distributed solution when the present results are extended to a more global problem, 
although it exceeds the scope of this paper. Please refer to \cite{CDC} for more details on the issue.}.
However, it is hard to obtain $d_{\rm q}$ since $w_{\rm q}$ is not measurable. 
We thus need to estimate $d_{\rm q}$ from the measurements of physical quantities.
This motivates us to interconnect the physical dynamics with the optimization dynamics.

Taking account of the above issues, we present 
\begin{subequations}
\label{eqn:3.4}
\begin{eqnarray}
\!\!\!\!\!\!\!\!\!\!\dot{\hat{z}}_u \!\!&\!\!=\!\!&\!\! -\alpha \{M^2(\hat z_u + \hat d_{\rm q}) + 
N(w_{\rm a} - \bar h) + \nabla f(\hat z_u) + p\},
\label{eqn:3.4a}\\
\!\!\!\!\!\!\!\!\!\! \dot{\hat{\lambda}} \!\!&\!\!=\!\!&\!\! [g(\hat z_u)]^+_{\hat \lambda},\ p = \nabla g(\hat z_u)\hat \lambda,
\label{eqn:3.4b}
\end{eqnarray}
\end{subequations}
where $\hat z_u$ and $\hat \lambda$ are estimates of $z_u^*$ and $\lambda^*$ respectively,
and $\alpha > 0$. 
The notation $[b]^+_a$ for real vectors $a,b$ with the same dimension 
provides
a vector whose $l$-th element, denoted by $([b]^+_a)_l$, is given by
\begin{eqnarray}
([b]^+_a)_l = \left\{
\begin{array}{l}
0,\mbox{ if } a_l = 0 \mbox{ and }b_l< 0\\
b_l,\mbox{ otherwise}
\end{array}
\right.,
\label{eqn:3.6}
\end{eqnarray}
where $a_l, b_l$ are the $l$-th element of $a,b$, respectively.
Note that (\ref{eqn:3.4}) is different from the primal-dual gradient algorithm for (\ref{eqn:3.5})
in that the term $d_{\rm a}$ is replaced by the measurement $w_{\rm a}$, and
$d_{\rm q}$ is replaced by its estimate $\hat d_{\rm q}$ whose production will be mentioned later.
The system is illustrated in Fig. \ref{fig:1}.


\begin{figure}\centering\centering
\includegraphics[width=7.5cm]{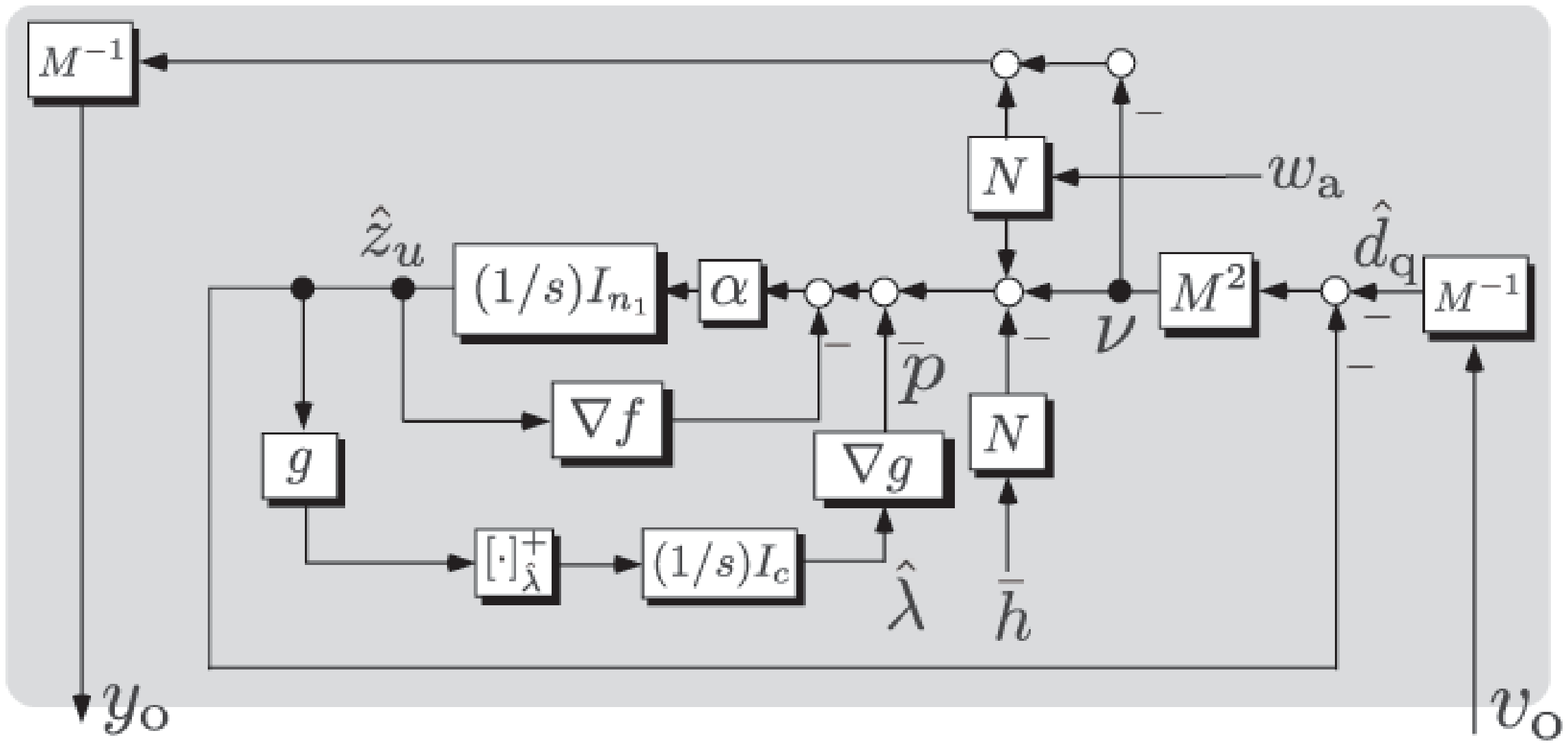}
\caption{Block diagram of optimization dynamics,
which is passive
from $\tilde v_{\rm o} = v_{\rm o} - v_{\rm o}^*$ 
$\tilde y_{\rm o} = \hat y_{\rm o} - y_{\rm o}^*$ with $v_{\rm o}^* := Md_{\rm q}$ and
$y_{\rm o}^* := z_{x1}^*$ 
(Lemma \ref{lem:hata7}).}
\label{fig:1}
\end{figure}

\subsection{Passivity Analysis for Optimization Dynamics}
\label{sec:3.2}

Hereafter, we analyze passivity of the above optimization process 
assuming that $w_{\rm a}$ is constant.
In this case, $w_{\rm a} \equiv d_{\rm a}$ holds. 
In practice, the disturbance $w_{\rm a}$, namely the ambient temperature $T^{\rm a}$,
is time-varying but the following results are applied to the practical case
if $w_{\rm a}$ is approximated by a piecewise constant signal,
which is fully expected since $T^{\rm a}$ usually varies slowly.

Under the above assumption, we define the output $\nu := - M^2(\hat z_u + \hat d_{\rm q})$ for (\ref{eqn:3.4}).
We then have the following lemma.
\begin{lemma}
\label{lem:3}
Consider the system (\ref{eqn:3.4}) with $w_{\rm a} \equiv d_{\rm a}$ and $\hat \lambda(0) \geq 0$. 
Then, under Assumption \ref{ass:0}, it is passive from $\tilde d_{\rm q} := \hat d_{\rm q} - d_{\rm q}$ to $-\tilde \nu$,
where $\tilde \nu := \nu - \nu^*$ and $\nu^* := -M^2(z^*_u + d_{\rm q})$.
\end{lemma}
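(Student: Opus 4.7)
The plan is to verify passivity by constructing a quadratic storage function in the error coordinates $\tilde{z}_u := \hat{z}_u - z_u^*$ and $\tilde{\lambda} := \hat{\lambda} - \lambda^*$, and then to compute its derivative along (\ref{eqn:3.4}). Specifically, I would take
\[
S(\tilde{z}_u, \tilde{\lambda}) = \frac{1}{2\alpha}\|\tilde{z}_u\|^2 + \frac{1}{2}\|\tilde{\lambda}\|^2,
\]
which is positive semi-definite and vanishes at the equilibrium. The first step is to rewrite the $\hat{z}_u$-equation in error form: substituting $w_{\rm a} \equiv d_{\rm a}$ and subtracting the KKT condition (\ref{eqn:3.5a}) gives
\[
\dot{\tilde{z}}_u = -\alpha\bigl\{M^2\tilde{z}_u + M^2\tilde{d}_{\rm q} + [\nabla f(\hat{z}_u) - \nabla f(z_u^*)] + [\nabla g(\hat{z}_u)\hat{\lambda} - \nabla g(z_u^*)\lambda^*]\bigr\}.
\]

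Next I would differentiate $S$ and use three standard ingredients to cancel the Lagrangian cross-terms. First, convexity of $f$ gives $\tilde{z}_u^\top[\nabla f(\hat{z}_u)-\nabla f(z_u^*)] \ge 0$. Second, I would verify that the projection in (\ref{eqn:3.4b}) keeps $\hat{\lambda}(t) \ge 0$ for all $t \ge 0$ (from $\hat{\lambda}(0)\ge 0$), and use the pointwise inequality $\tilde{\lambda}^\top[g(\hat{z}_u)]^+_{\hat{\lambda}} \le \tilde{\lambda}^\top g(\hat{z}_u)$ (which holds componentwise because whenever $[b]^+_a \ne b$ one has $a_l = 0 \le \lambda^*_l$ and $b_l < 0$). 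Third, by convexity of each component $g_k$ applied in both directions, together with $g(z_u^*)\le 0$, $\hat{\lambda}\ge 0$, $\lambda^*\ge 0$, and complementary slackness $\lambda^*\circ g(z_u^*) = 0$, I get
\[
\tilde{\lambda}^\top g(\hat{z}_u) = \hat{\lambda}^\top g(\hat{z}_u) - \lambda^{*\top}g(\hat{z}_u) \le \tilde{z}_u^\top \bigl[\nabla g(\hat{z}_u)\hat{\lambda} - \nabla g(z_u^*)\lambda^*\bigr].
\]
Chaining these bounds cancels the dual gradient contributions and leaves only the quadratic and disturbance terms.

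The third step is the algebraic identity that makes the advertised passive pair work: since $M$ is symmetric, completing the square yields
\[
\tilde{z}_u^\top M^2 \tilde{z}_u + 2\,\tilde{z}_u^\top M^2 \tilde{d}_{\rm q} = \|M(\tilde{z}_u + \tilde{d}_{\rm q})\|^2 - \tilde{d}_{\rm q}^\top M^2 \tilde{d}_{\rm q}.
\]
Combining everything, I expect to arrive at
\[
\dot{S} \le -\|M(\tilde{z}_u + \tilde{d}_{\rm q})\|^2 + \tilde{z}_u^\top M^2\tilde{d}_{\rm q} + \tilde{d}_{\rm q}^\top M^2 \tilde{d}_{\rm q} = \tilde{d}_{\rm q}^\top (-\tilde{\nu}),
\]
where I have used $-\tilde{\nu} = M^2(\tilde{z}_u + \tilde{d}_{\rm q})$. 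Integration of this inequality delivers (\ref{eq:0.1}) and proves passivity from $\tilde{d}_{\rm q}$ to $-\tilde{\nu}$.

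The main obstacle I anticipate is not the $f$-term or the quadratic identity, but the careful handling of the projected dual flow: one has to invariance-argue $\hat{\lambda}(t)\ge 0$ and then combine the projection inequality with two directions of convexity plus complementary slackness to obtain $\tilde{\lambda}^\top[g(\hat{z}_u)]^+_{\hat{\lambda}} \le \tilde{z}_u^\top[\nabla g(\hat{z}_u)\hat{\lambda}-\nabla g(z_u^*)\lambda^*]$. Once this saddle-type inequality is in hand, the sign of the cross-term $-\tilde{z}_u^\top M^2\tilde{d}_{\rm q}$ that appears in $\dot{S}$ looks troublesome at first glance, but is absorbed precisely by the completion of squares above; I would flag this identity as the pivotal algebraic step that aligns the chosen output $-\tilde{\nu}$ with a passive supply rate.
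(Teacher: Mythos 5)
Your proof is correct and follows essentially the same route as the paper: the same storage function $\frac{1}{2\alpha}\|\tilde z_u\|^2+\frac{1}{2}\|\hat\lambda-\lambda^*\|^2$, the same use of convexity of $f$ together with the projection, two-sided convexity of $g$, and complementary-slackness inequalities for the dual flow (which the paper merely packages into two sub-lemmas, outsourcing the dual part to a citation), and the same final completion of squares in $M^2$. The only blemish is the ``$=$'' in your last display, which should be ``$\leq$'' because the nonpositive term $-\|M(\tilde z_u+\tilde d_{\rm q})\|^2$ is dropped before identifying the remainder with $\tilde d_{\rm q}^{\top}(-\tilde \nu)$.
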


\begin{proof}
See Appendix \ref{appendix:1}.
\end{proof}

Let us next transform the output $\nu$ to
\begin{align}
y_{\rm o} &:= -M^{-1}\nu + B^{\top}A^{-1} d_{\rm a} = M(\hat z_u + \hat d_{\rm q}) + B^{\top}A^{-1} d_{\rm a}, 
\nonumber\\
y_{\rm o} ^* &:= -M^{-1}\nu^* + B^{\top}A^{-1} d_{\rm a}.
\nonumber
\end{align}
Comparing (\ref{def:zx1}) and the above definition of $y_{\rm o}$, the signal
$y_{\rm o}$ is regarded as an estimate of $z^*_{x1}$.
We also define $v_{\rm o} := M\hat d_{\rm q}$ and $v_{\rm o}^* := Md_{\rm q}$.
Then, we can prove the following lemma.

\begin{lemma}
\label{lem:hata7}
Consider the system (\ref{eqn:3.4})
with $w_{\rm a} \equiv d_{\rm a}$ and $\hat \lambda(0) \geq 0$. 
Then,  under Assumption \ref{ass:0}, it is output feedback passive from $\tilde v_{\rm o} := v_{\rm o} - v_{\rm o}^*$ to 
$\tilde y_{\rm o} := y_{\rm o} - y_{\rm o}^*$ with index $1$.
\end{lemma}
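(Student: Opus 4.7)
The plan is to reuse the storage function $S$ constructed in the proof of Lemma~\ref{lem:3} and extract from that same computation a sharper bound on $\dot S$ encoding both passivity and an additional output-damping term. Setting $\tilde z_u := \hat z_u - z_u^*$, under $w_{\rm a}\equiv d_{\rm a}$ the definitions in the statement give
\begin{equation*}
-\tilde\nu = M^2(\tilde z_u + \tilde d_{\rm q}),\quad \tilde v_{\rm o} = M\tilde d_{\rm q},\quad \tilde y_{\rm o} = M(\tilde z_u + \tilde d_{\rm q}),
\end{equation*}
so that $\tilde v_{\rm o}^\top\tilde y_{\rm o} = \tilde d_{\rm q}^\top(-\tilde\nu)$ and $\|\tilde y_{\rm o}\|^2 = (\tilde z_u + \tilde d_{\rm q})^\top(-\tilde\nu)$.

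First I would revisit the derivation underlying Lemma~\ref{lem:3}. Differentiating the quadratic storage $S$ along (\ref{eqn:3.4}), subtracting the KKT identity (\ref{eqn:3.5a}), and using convexity of $f$ together with the standard projection argument of \cite{cortes} for the dual update, one obtains the refined bound
\begin{equation*}
\dot S \;\le\; -\tilde z_u^\top M^2(\tilde z_u + \tilde d_{\rm q}) \;=\; -\tilde z_u^\top(-\tilde\nu).
\end{equation*}
The passivity statement of Lemma~\ref{lem:3} is recovered from this by writing $-\tilde z_u^\top(-\tilde\nu) = \tilde d_{\rm q}^\top(-\tilde\nu) - (\tilde z_u+\tilde d_{\rm q})^\top(-\tilde\nu)$ and dropping the nonnegative quadratic form $(\tilde z_u+\tilde d_{\rm q})^\top M^2(\tilde z_u+\tilde d_{\rm q})$. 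For the present lemma I retain it: substituting the identities above yields
\begin{equation*}
\dot S \;\le\; \tilde d_{\rm q}^\top(-\tilde\nu) - (\tilde z_u+\tilde d_{\rm q})^\top(-\tilde\nu) \;=\; \tilde v_{\rm o}^\top\tilde y_{\rm o} - \|\tilde y_{\rm o}\|^2,
\end{equation*}
which, after integration over $[0,t]$, is exactly (\ref{eq:0.3}) with $\varepsilon = 1$.

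The main obstacle is the bookkeeping in the first step: one must confirm that the derivation of Lemma~\ref{lem:3} genuinely furnishes the sharper bound $\dot S \le -\tilde z_u^\top M^2(\tilde z_u + \tilde d_{\rm q})$ before any weakening. Concretely, after subtracting (\ref{eqn:3.5a}) the gradient-difference term $\nabla f(\hat z_u) - \nabla f(z_u^*)$ contributes a nonpositive quantity by monotonicity, and the constraint contributions $-\tilde z_u^\top(\nabla g(\hat z_u)\hat\lambda - \nabla g(z_u^*)\lambda^*)$ and $\tilde\lambda^\top[g(\hat z_u)]^+_{\hat\lambda}$ combine into a nonpositive quantity via the convexity-plus-projection argument of \cite{cortes}. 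Only the quadratic term $-\tilde z_u^\top M^2(\tilde z_u + \tilde d_{\rm q})$ survives, after which the passage to the stated output feedback passivity is the one-line algebraic rearrangement above.
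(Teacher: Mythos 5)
Your proposal is correct and follows essentially the same route as the paper: it reuses the storage function $S_{\rm o}=V+U$ from the proof of Lemma~\ref{lem:3}, retains the quadratic term $-(\tilde z_u+\tilde d_{\rm q})^{\top}M^2(\tilde z_u+\tilde d_{\rm q})$ that the paper keeps in (\ref{eqn:3.18}), and identifies it with $-\|\tilde y_{\rm o}\|^2$ while the cross term becomes $\tilde v_{\rm o}^{\top}\tilde y_{\rm o}$, exactly as in (\ref{eqn:3.197}). The only cosmetic difference is that the dual projection requires the upper Dini derivative $D^+S_{\rm o}$ rather than $\dot S$, which you implicitly acknowledge by invoking the projection argument of \cite{cortes}.
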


\begin{proof}
It is easy to see from $\nu^* = -M^2(z_u^* + d_{\rm q})$ 
and (\ref{def:zx1}) that
$y_{\rm o}^* = z^*_{x1}$.
From (\ref{eqn:3.18}), $\tilde y_{\rm o} = -M^{-1}\tilde \nu$ and $\tilde v_{\rm o} = M\tilde d_{\rm q}$,
we have the following inequality.
\begin{eqnarray}
D^+S_{\rm o} \leq 
\tilde y_{\rm o}^{\top}\tilde v_{\rm o} - \|\tilde y_{\rm o}\|^2
\label{eqn:3.197}
\end{eqnarray}
Integrating this in time completes the proof.
\end{proof}



\section{Physical Dynamics}

In this section, we design a physical dynamics and prove its passivity.
A passivity-based design for the model  (\ref{eqn:1.4}) is 
presented in \cite{jtwen}, but we modify the control architecture
in order to interconnect it with the optimization dynamics.

\subsection{Controller Design}

In this subsection, we design a controller to determine the input $u$ so that
$x_1$ tracks a reference signal $r$.
Here we assume the following assumption, where
\[
A = \begin{bmatrix}
A_1&A_2^{\top}\\
A_2&A_3
\end{bmatrix},\ A_1 \in \R^{n_1\times n_1},\
A_3 \in \R^{n_2\times n_2}.
\]
\begin{assumption}
\label{ass:1}
The matrix $MA_1+A_1M$ is positive definite.
\end{assumption}
This property does not always hold for any positive definite matrices $A_1$ and $M$, 
but it is expected to be true in many practical cases since
the diagonal elements tend to be dominant both for $A_1$ and 
$M = (A_1 - A_2^{\top}A_3^{-1}A_2)^{-1}$ in this application \cite{Xuan}. 
Note that this assumption holds in the full actuation case ($A_1 = A, M = A^{-1}$).
Inspired by the fact that many existing systems employ
Proportional-Integral (PI) controllers (with logics) as the local controller,
we design the following controller adding reference and disturbance feedforward terms.
\begin{subequations}
\label{eqn:2.8}
\begin{eqnarray}
\dot \xi \!\!&\!\!=\!\!&\!\! k_{\rm I}(r - x_1)
\label{eqn:2.8a}\\
u \!\!&\!\!=\!\!&\!\! k_{\rm P}(r - x_1) + \xi + \kappa r + F w_{\rm a}
\label{eqn:2.8b}
\end{eqnarray}
\end{subequations}
where $k_{\rm P} > 0,\ k_{\rm I} > 0$ and $F := [-I_{n_1}\ A_2^{\top}A_3^{-1}]$ and
$I_{n_1}$ is the $n_1$-by-$n_1$ identity matrix.
The feedforward gain $\kappa > 0$ is selected so that
\begin{eqnarray}
P :=  MA_1+A_1M - 2\kappa M > 0.
\label{eqn:9.1}
\end{eqnarray}
Such a $\kappa$ exists under Assumption \ref{ass:1}.

%


Substituting (\ref{eqn:2.8}) into (\ref{eqn:1.4}) yields
\begin{subequations}
\label{eqn:2.9}
\begin{eqnarray}
\dot x \!\!&\!\!=\!\!&\!\! -Ax
+ k_{\rm P} B (r  - x_1) + \kappa B  r  
+ B\xi 
\nonumber\\
&&\hspace{1cm}+ Bw_{\rm q} + (BF+I_n) w_{\rm a},
\label{eqn:2.9a}\\
\dot \xi \!\!&\!\!=\!\!&\!\! k_{\rm I}(r - x_1).
\label{eqn:2.9b}
\end{eqnarray}
\end{subequations}
For the system, we have the following lemma.

\begin{lemma}
\label{lem:steady}
The steady states $x^*$ 
and $\xi^*$ of (\ref{eqn:2.9}) 
for $r \equiv r^*$,
$w_{\rm a} \equiv d_{\rm a}$ and $w_{\rm q} \equiv d_{\rm q}$
are given as follows.
\begin{eqnarray}
x^* \!\!&\!\!=\!\!&\!\! -F^{\top} r^* + \begin{bmatrix}
0\\
A_3^{-1}B_c^{\top}
\end{bmatrix}d_{\rm a},\ 
B_c := 
\begin{bmatrix}
0\\
I_{n_2}
\end{bmatrix}\in \R^{n\times n_2}
\nonumber\\
\xi^{*} \!\!&\!\!=\!\!&\!\! (M^{-1} - \kappa I_{n_2})r^* - d_{\rm q}.
\label{eqn:99.3}
\end{eqnarray}
\end{lemma}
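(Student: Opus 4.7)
The plan is a direct steady-state calculation, exploiting the block structure of $A$ and the Schur-complement identity $M=B^{\top}A^{-1}B=(A_1-A_2^{\top}A_3^{-1}A_2)^{-1}$ (which is already noted in the discussion of Assumption \ref{ass:1}). Throughout I write $x^{*}=[(x_1^{*})^{\top}\ (x_2^{*})^{\top}]^{\top}$ and decompose $d_{\rm a}=[d_{{\rm a}1}^{\top}\ d_{{\rm a}2}^{\top}]^{\top}$ with $d_{{\rm a}1}\in\R^{n_1}$, $d_{{\rm a}2}\in\R^{n_2}$.

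First I would read off $x_1^{*}=r^{*}$ from $\dot\xi=0$ in (\ref{eqn:2.9b}), using $k_{\rm I}>0$. Substituting this into (\ref{eqn:2.9a}) with $\dot x=0$ eliminates the $k_{\rm P}$ term and leaves
\[
Ax^{*}=\kappa B r^{*}+B\xi^{*}+Bd_{\rm q}+(BF+I_n)d_{\rm a}.
\]
Next I would compute $BF$ and $BF+I_n$ explicitly using $B=[I_{n_1}\ 0]^{\top}$ and $F=[-I_{n_1}\ A_2^{\top}A_3^{-1}]$, obtaining
\[
BF+I_n=\begin{bmatrix} 0 & A_2^{\top}A_3^{-1}\\ 0 & I_{n_2}\end{bmatrix},
\]
so that $(BF+I_n)d_{\rm a}=[\,(A_2^{\top}A_3^{-1}d_{{\rm a}2})^{\top}\ d_{{\rm a}2}^{\top}\,]^{\top}$, and the lower block of the steady-state equation is simply $A_2 x_1^{*}+A_3 x_2^{*}=d_{{\rm a}2}$.

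Solving this lower block with $x_1^{*}=r^{*}$ yields $x_2^{*}=A_3^{-1}(d_{{\rm a}2}-A_2 r^{*})$, which is exactly the second block of $-F^{\top}r^{*}+[0;\,A_3^{-1}B_c^{\top}]d_{\rm a}$ because $B_c^{\top}d_{\rm a}=d_{{\rm a}2}$ and $F^{\top}=[-I_{n_1};\,A_3^{-1}A_2]$ (using symmetry of $A_3$); this establishes the formula for $x^{*}$. Inserting $x_1^{*}=r^{*}$ and $x_2^{*}$ into the upper block, and cancelling the $A_2^{\top}A_3^{-1}d_{{\rm a}2}$ terms on both sides, reduces the equation to
\[
(A_1-A_2^{\top}A_3^{-1}A_2)\,r^{*}=\kappa r^{*}+\xi^{*}+d_{\rm q},
\]
at which point the Schur-complement identity $M^{-1}=A_1-A_2^{\top}A_3^{-1}A_2$ gives $\xi^{*}=(M^{-1}-\kappa I_{n_1})r^{*}-d_{\rm q}$, matching the stated expression (the $I_{n_2}$ in the statement appears to be a typographical slip for $I_{n_1}$, since $r^{*}\in\R^{n_1}$).

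There is no real obstacle: the argument is entirely algebraic, and the only nontrivial step is identifying $M^{-1}$ with the Schur complement of $A_3$ in $A$, which is already in the paper. The one thing to be careful about is the block bookkeeping of $B$, $F$, and $BF+I_n$, since the formula for $x^{*}$ mixes the top-block contribution from $-F^{\top}r^{*}$ with the bottom-block contribution coming purely from $d_{\rm a}$ through $(BF+I_n)d_{\rm a}$.
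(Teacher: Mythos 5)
Your proof is correct: the paper states Lemma \ref{lem:steady} without proof, and your direct steady-state computation (reading $x_1^{*}=r^{*}$ from $\dot\xi=0$, solving the lower block for $x_2^{*}$, and using the Schur-complement identity $M^{-1}=A_1-A_2^{\top}A_3^{-1}A_2$ in the upper block) is exactly the intended verification. Your observation that the $I_{n_2}$ in the expression for $\xi^{*}$ should be $I_{n_1}$ is also correct, since $r^{*}\in\R^{n_1}$ and $M\in\R^{n_1\times n_1}$.
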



Equation (\ref{eqn:2.9}) is now rewritten as
\begin{subequations}
\label{eqn:99.5}
\begin{eqnarray}
\dot x \!\!&\!\!=\!\!&\!\!  
-\bar A x + BM^{-1}\zeta
+ Bw_{\rm q} + \bar F w_{\rm a}
\label{eqn:99.5a}\\
\dot \xi \!\!&\!\!=\!\!&\!\! k_{\rm I}(r - x_1),\
\zeta = \bar k_{\rm P} M (r  - x_1) 
+ M\xi,
\label{eqn:99.5b}
\end{eqnarray}
\end{subequations}
where
\[
\bar A := A - \kappa BB^{\top},\ 
\bar k_{\rm P} := k_{\rm P}+\kappa,\ \bar F := \begin{bmatrix}
-A_2^{\top}A_3^{-1}\\
I_{n_2}
\end{bmatrix}B_c^{\top}.
\]
Remark that, at the steady state, the variable $\zeta$ is
equal to 
\begin{eqnarray}
\zeta^* := M\xi^{*} = K r^* - Md_{\rm q},\ \ 
K := I_{n_1} - \kappa M.
\label{eqn:99.6}
\end{eqnarray}

\subsection{Passivity Analysis for Physical Dynamics}

\begin{figure}\centering\centering
\includegraphics[width=8.4cm]{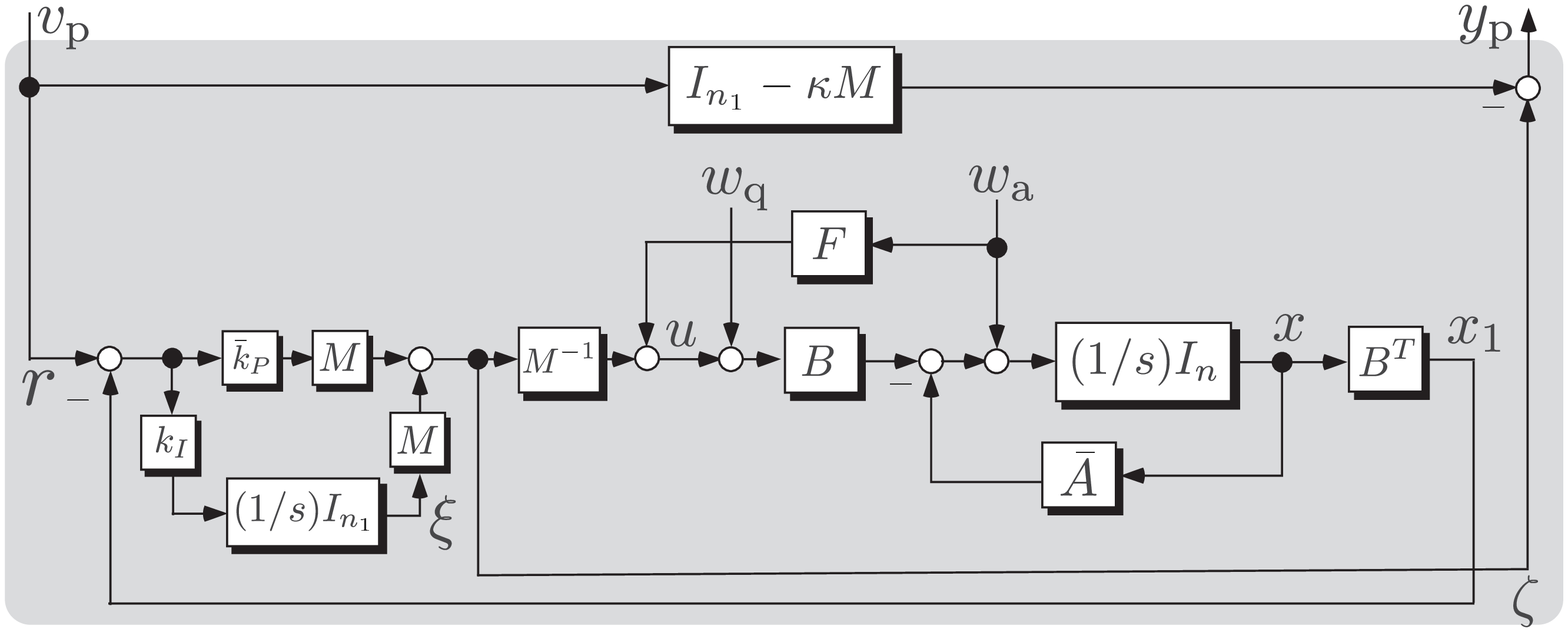}
\caption{Block diagram of the physical dynamics with the local controller, which 
is passivity-short from $\tilde v_{\rm p} = v_{\rm p} - v_{\rm p}^*$ to 
$\tilde y_{\rm p} = y_{\rm p} - y_{\rm p}^*$ with $v_{\rm p}^* := z_{x1}^*$ and $y_{\rm p}^* := -Md_{\rm q}$ (Lemma \ref{lem:hata5}).}
\label{fig:4}
\end{figure}

In this subsection, we analyze passivity of the system (\ref{eqn:99.5})
 assuming that $w_{\rm q}$ and $w_{\rm a}$ are constant.
The case of the time varying $w_{\rm q}$ will be treated in the end of the next section.

Choose $\zeta$ as the output and prove
passivity as follows. 

\begin{lemma}
\label{lem:hata12}
Consider the system (\ref{eqn:99.5}) with 
$r \equiv r^*$, $w_{\rm a} \equiv d_{\rm a}$ and $w_{\rm q} \equiv d_{\rm q}$.
Then, under Assumption \ref{ass:1}, the system is 
passive from $\tilde r := r - r^*$
to $\tilde \zeta := \zeta - \zeta^*$.
\end{lemma}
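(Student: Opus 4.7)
The strategy is to establish the passivity inequality by constructing a storage function on the error system obtained by subtracting the equilibrium of Lemma~\ref{lem:steady}. Setting $\tilde x := x - x^*$, $\tilde \xi := \xi - \xi^*$, and $\tilde e := \tilde r - B^{\top}\tilde x$, and exploiting the steady-state relations to cancel the constant disturbance feed-through $Bd_{\rm q}$ and $\bar F d_{\rm a}$, the error dynamics reduce to
\begin{align*}
\dot{\tilde x} &= -\bar A\tilde x + BM^{-1}\tilde \zeta,\qquad \dot{\tilde \xi} = k_{\rm I}\tilde e,\\
\tilde \zeta &= \bar k_{\rm P}M\tilde e + M\tilde \xi,
\end{align*}
driven by the exogenous signal $\tilde r$. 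The task is to exhibit $S \ge 0$ with $\dot S \le \tilde r^{\top}\tilde \zeta$.

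I propose the storage function $S = \tfrac{1}{2}\tilde x^{\top}Q\tilde x + \tfrac{1}{2k_{\rm I}}\tilde \xi^{\top}M\tilde \xi$ with $Q = \diag(M, I_{n_2})$. The crucial feature of this $Q$ is the identity $QB = BM$, which has two consequences: first, it converts $\tilde x^{\top}QBM^{-1}\tilde \zeta$ into $\tilde x_1^{\top}\tilde \zeta$, so that the plant injection term pairs cleanly with the controller output; second, it makes the $(1,1)$-block of $\sym(Q\bar A)$ equal to $\tfrac{1}{2}P$, which is precisely the matrix constrained to be positive definite by~(\ref{eqn:9.1}). Using $M\tilde \xi = \tilde \zeta - \bar k_{\rm P}M\tilde e$ in the $\dot{\tilde \xi}$-term and the identity $\tilde x_1 + \tilde e = \tilde r$, a short computation collapses all input-output cross terms and yields
\[
\dot S \;=\; \tilde r^{\top}\tilde \zeta \;-\; \tilde x^{\top}\sym(Q\bar A)\,\tilde x \;-\; \bar k_{\rm P}\,\tilde e^{\top}M\tilde e,
\]
and integration in time gives the passivity inequality as soon as $\sym(Q\bar A)$ is positive semidefinite.

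The main obstacle I anticipate is verifying this last matrix inequality. A direct block computation gives
\[
\sym(Q\bar A) \;=\; \tfrac{1}{2}\begin{bmatrix}P & (M+I_{n_1})A_2^{\top} \\ A_2(M+I_{n_1}) & 2A_3\end{bmatrix},
\]
whose diagonal blocks are positive definite by~(\ref{eqn:9.1}) and by $A\succ 0$ respectively, but whose Schur complement is not automatic from Assumption~\ref{ass:1} alone. To close this gap I expect to tune the bottom-right weight of $Q$ (replacing $I_{n_2}$ by a positive-definite $Q_{22}$ such as a scaled $A_3^{-1}$) so as to exploit the Schur identity $M^{-1} = A_1 - A_2^{\top}A_3^{-1}A_2$ and cancel the off-diagonal coupling, or equivalently to complete the square in $\tilde x_2$ and absorb the residual quadratic in $\tilde x_1$ against the $\tfrac{1}{2}P$ block. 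In either route, the positivity margin in $P$ guaranteed by the choice of $\kappa$ in~(\ref{eqn:9.1}) is what ultimately dominates the cross coupling introduced by $A_2$, completing the proof once $\dot S \le \tilde r^{\top}\tilde \zeta$ is integrated on $[0,t]$.
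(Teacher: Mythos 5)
Your overall architecture matches the paper's: an additive storage function consisting of a quadratic form $\tfrac{1}{2}\tilde x^{\top}Q\tilde x$ with block-diagonal $Q$ whose upper-left block is $M$ (so that $QB=BM$ pairs the plant state with the controller output), plus the integrator storage $\tfrac{1}{2k_{\rm I}}\tilde\xi^{\top}M\tilde\xi$; your dissipation identity $\dot S=\tilde r^{\top}\tilde\zeta-\tilde x^{\top}\sym(Q\bar A)\tilde x-\bar k_{\rm P}\tilde e^{\top}M\tilde e$ is correct. The problem is that you stop exactly at the step that carries the mathematical content: you never establish $\sym(Q\bar A)\geq 0$. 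With $Q_{22}=I_{n_2}$ the off-diagonal block is $\tfrac{1}{2}(M+I_{n_1})A_2^{\top}$ and, as you yourself note, Assumption \ref{ass:1} gives no control over its size relative to $P$. Your proposed repairs do not obviously close the gap either: exact cancellation of the off-diagonal block would require $A_2^{\top}Q_{22}=-MA_2^{\top}$, which is impossible for positive definite $Q_{22}$ unless $A_2=0$; and ``completing the square in $\tilde x_2$'' is just the Schur-complement condition restated, so you would still have to exhibit some $Q_{22}>0$ satisfying the resulting quadratic matrix inequality. Asserting that the positivity margin in $P$ ``ultimately dominates the cross coupling'' is not a proof---$P>0$ fixes a margin while $A_2$ is arbitrary within the model class, so domination is not automatic.

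The paper closes precisely this gap. It sets $Q_{22}=\Psi$ where $\Psi\geq 0$ solves the Riccati inequality (\ref{eqn:99.9}), which is exactly the Schur-complement condition for $\bar \Psi \bar A+\bar A\bar \Psi>0$ taken with respect to the $P$ block; solvability of (\ref{eqn:99.9}) is then deduced from stabilizability of $(\bar A_3,A_2P^{-1/2})$ and detectability of $(P^{-1/2}MA_2^{\top},\bar A_3)$ (Lemma \ref{ass:2}), which hold because a suitable feedback/output injection recovers the Hurwitz matrix $-A_3$. Some existence argument of this kind is indispensable; without it your claimed inequality $\dot S\leq\tilde r^{\top}\tilde\zeta$ is unsupported.
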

\begin{proof}
See Appendix \ref{appendix:2}.
\end{proof}
Remark that this lemma holds regardless of the value of $r^*$.

To extract the term $Md_{\rm q}$ from (\ref{eqn:99.6}), we define the output 
\begin{eqnarray}
y_{\rm p} := \zeta - K r,\ y_{\rm p}^* := \zeta^* - K r^* = - Md_{\rm q},
\label{eqn:9.15}
\end{eqnarray}
and $v_{\rm p} := r$ and $v_{\rm p}^* := z_{x1}^*$.
Then, we have the following.
\begin{lemma}
\label{lem:hata5}
Consider the system (\ref{eqn:99.5}) with 
$r \equiv r^*$,
$w_{\rm a} \equiv d_{\rm a}$ and $w_{\rm q} \equiv d_{\rm q}$.
Then, under Assumption \ref{ass:1}, the system 
from $\tilde v_{\rm p} := v_{\rm p} - v_{\rm p}^*$ to $\tilde y_{\rm p} := y_{\rm p} - y_{\rm p}^*$
is  passivity-short with the impact coefficient
$1 - \kappa\sigma$, where $\sigma > 0$ is the minimal eigenvalue of $M$.
\end{lemma}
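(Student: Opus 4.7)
The plan is to derive Lemma \ref{lem:hata5} from Lemma \ref{lem:hata12} by a purely algebraic output transformation, with no genuinely new dynamical analysis. First I would identify the equilibrium reference $r^*$ appearing in Lemma \ref{lem:hata12} with the optimal value $v_{\rm p}^* = z_{x1}^*$, so that the two increment signals coincide: $\tilde r = r - r^* = v_{\rm p} - v_{\rm p}^* = \tilde v_{\rm p}$. From the definitions $y_{\rm p} = \zeta - Kr$ and $y_{\rm p}^* = \zeta^* - Kr^*$ in (\ref{eqn:9.15}), subtracting immediately gives $\tilde y_{\rm p} = \tilde \zeta - K\tilde v_{\rm p}$, i.e.\ $\tilde \zeta = \tilde y_{\rm p} + K\tilde v_{\rm p}$.

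Next I would invoke Lemma \ref{lem:hata12}, which supplies a storage function $S_{\rm p}$ satisfying
\[
D^+ S_{\rm p} \leq \tilde \zeta^{\top} \tilde r = \tilde \zeta^{\top} \tilde v_{\rm p}.
\]
Substituting the output relation above yields
\[
D^+ S_{\rm p} \leq \tilde y_{\rm p}^{\top} \tilde v_{\rm p} + \tilde v_{\rm p}^{\top} K \tilde v_{\rm p},
\]
so the only remaining task is to bound the quadratic form $\tilde v_{\rm p}^{\top} K \tilde v_{\rm p}$ from above by a scalar multiple of $\|\tilde v_{\rm p}\|^2$.

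Since $A$ is symmetric positive definite and $B = [I_{n_1}\ 0]^{\top}$, the matrix $M = B^{\top} A^{-1} B$ is symmetric positive definite, so $K = I_{n_1} - \kappa M$ is symmetric. Letting $\sigma > 0$ denote the minimum eigenvalue of $M$, the spectral bound $\tilde v_{\rm p}^{\top} M \tilde v_{\rm p} \geq \sigma \|\tilde v_{\rm p}\|^2$ gives $\tilde v_{\rm p}^{\top} K \tilde v_{\rm p} \leq (1 - \kappa\sigma)\|\tilde v_{\rm p}\|^2$. Integrating the resulting inequality in time then produces exactly the passivity-short condition (\ref{eq:0.4}) with impact coefficient $1 - \kappa\sigma$. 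The proof has no real obstacle: once the identification $r^* = z_{x1}^*$ is accepted, everything collapses to a change-of-output calculation plus one eigenvalue inequality, all the nontrivial dynamical content having already been absorbed into Lemma \ref{lem:hata12}. Lemma \ref{lem:hata5} merely repackages that passivity into the $(v_{\rm p}, y_{\rm p})$ interface needed to later close the loop with the optimization dynamics of Lemma \ref{lem:hata7}.
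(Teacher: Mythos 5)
Your proposal is correct and follows essentially the same route as the paper: identify $r^* = z_{x1}^* = v_{\rm p}^*$, substitute $\tilde\zeta = \tilde y_{\rm p} + K\tilde v_{\rm p}$ into the storage inequality from Lemma \ref{lem:hata12}, and bound $\tilde v_{\rm p}^{\top}K\tilde v_{\rm p}$ by $(1-\kappa\sigma)\|\tilde v_{\rm p}\|^2$ via the minimal eigenvalue of $M$. The only difference is that the paper retains the nonpositive term $-k_{\rm P}(\tilde v_{\rm p}-\tilde x_1)^{\top}M(\tilde v_{\rm p}-\tilde x_1)$ in (\ref{eqn:9.15}) for later use in Theorem \ref{thm:1}, which you drop; this is immaterial to the lemma itself.
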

\begin{proof}
If we take $r^* = z_{x1}^*$, we have
$\tilde y_{\rm p} = \tilde \zeta - K \tilde v_{\rm p}$.
Substituting this into (\ref{eqn:9.14}) yields
\begin{align}
\dot S_{\rm p}&\leq
\tilde y_{\rm p}^{\top} \tilde v_{\rm p} +
\tilde v_{\rm p}^{\top} K\tilde v_{\rm p}
- k_{\rm P}(\tilde v_{\rm p} - \tilde x_1)^{\top}M(\tilde v_{\rm p} - \tilde x_1)
\label{eqn:not_sigma}\\
&\leq
\tilde y_{\rm p}^{\top} \tilde v_{\rm p} +
(1-\kappa\sigma)\|\tilde v_{\rm p}\|^2
-k_{\rm P}\sigma \|\tilde v_{\rm p} - \tilde x_1\|^2
\label{eqn:9.15}
\end{align}
This completes the proof.
\end{proof}

\section{Interconnection of Optimization and Physical Dynamics}
\label{sec:5}

Let us interconnect the optimization dynamics
(\ref{eqn:3.4}) and physical dynamics (\ref{eqn:99.5}). 
Remark that the stationary value of $y_{\rm o}$, $y_{\rm o}^* = z_{x1}^*$,
is equivalent to that of $v_{\rm p}$, $v_{\rm p}^* = z_{x1}^*$.
Also, $v_{\rm o}^* = -y_{\rm p}^*$ holds.
Inspired by these facts, we interconnect these systems via the 
negative feedback as
$v_{\rm o} = - y_{\rm p},\  v_{\rm p} = y_{\rm o}$.
We then have the following main result of this paper.


\begin{theorem}
\label{thm:1}
Suppose that $\hat \lambda(0)\geq 0$, 
$w_{\rm a} \equiv d_{\rm a}$ and $w_{\rm q} \equiv d_{\rm q}$.
Then, if Assumptions \ref{ass:0} and \ref{ass:1} hold,
the interconnection of (\ref{eqn:3.4}) and 
(\ref{eqn:99.5}) via $v_{\rm o} = - y_{\rm p},\  v_{\rm p} = y_{\rm o}$
ensures that $x_1 \to z_{x1}^*$.
\end{theorem}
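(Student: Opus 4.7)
The plan is to combine the two passivity certificates established in Lemmas \ref{lem:hata7} and \ref{lem:hata5} through the interconnection $v_{\rm o}=-y_{\rm p},\ v_{\rm p}=y_{\rm o}$ to build a Lyapunov-type argument. The key observation is that the operating points are consistent: $y_{\rm o}^{*} = v_{\rm p}^{*} = z_{x1}^{*}$ and $v_{\rm o}^{*} = -y_{\rm p}^{*} = Md_{\rm q}$. Expressed in error variables, the interconnection therefore reads $\tilde v_{\rm o} = -\tilde y_{\rm p}$ and $\tilde v_{\rm p} = \tilde y_{\rm o}$, so the two cross (supply-rate) terms in the dissipation inequalities of Lemmas \ref{lem:hata7} and \ref{lem:hata5} cancel exactly when the inequalities are added.

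Concretely, I would define the joint storage $V := S_{\rm o}+S_{\rm p}$. Substituting the interconnection into (\ref{eqn:3.197}) and into the inequality derived in the proof of Lemma \ref{lem:hata5}, and summing, yields
\begin{equation*}
\dot V \leq -\|\tilde y_{\rm o}\|^{2} + (1-\kappa\sigma)\|\tilde y_{\rm o}\|^{2} - k_{\rm P}\sigma\|\tilde y_{\rm o}-\tilde x_{1}\|^{2} = -\kappa\sigma\|\tilde y_{\rm o}\|^{2} - k_{\rm P}\sigma\|\tilde y_{\rm o}-\tilde x_{1}\|^{2}.
\end{equation*}
Since $\kappa>0$ and $\sigma>0$, both terms on the right are non-positive, so $V$ is non-increasing along closed-loop trajectories. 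This yields boundedness of the full error state, and then invoking Barbalat's lemma (using uniform continuity of $\dot V$, which follows from boundedness of states and inputs together with the structure of (\ref{eqn:3.4}) and (\ref{eqn:99.5})) one obtains $\tilde y_{\rm o}\to 0$ and $\tilde y_{\rm o}-\tilde x_{1}\to 0$. The desired conclusion $\tilde x_{1}\to 0$, i.e.\ $x_{1}\to z_{x1}^{*}$, follows immediately by the triangle inequality.

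The main obstacle I anticipate is handling the discontinuous projected dynamics $[g(\hat z_u)]^{+}_{\hat\lambda}$ in (\ref{eqn:3.4b}) rigorously: one must argue that solutions are well-defined in the Caratheodory sense, that $S_{\rm o}$ remains a valid (possibly non-smooth) storage function across the boundary $\hat\lambda_\ell=0$ so that the Dini-type inequality $D^{+}S_{\rm o}\leq\cdots$ lifts to the integrated form used here, and that $\dot V$ is uniformly continuous in $t$ despite the switching, so that Barbalat's lemma is legitimately applicable. A secondary subtlety is verifying that $V$ is radially unbounded in the full error state $(\hat z_u-z_u^{*},\,\hat\lambda-\lambda^{*},\,x-x^{*},\,\xi-\xi^{*})$, which is needed to propagate the non-increase of $V$ into boundedness of the multiplier estimate $\hat\lambda$ and thereby close the LaSalle/Barbalat step.
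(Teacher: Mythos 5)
Your proposal is correct and follows essentially the same route as the paper's own proof: sum the storage functions $S_{\rm o}+S_{\rm p}$, use the interconnection to cancel the cross supply-rate terms and absorb the shortage $(1-\kappa\sigma)\|\tilde v_{\rm p}\|^2$ into the excess $-\|\tilde y_{\rm o}\|^2$, obtaining $D^+S\le -\kappa\sigma\|\tilde y_{\rm o}\|^2-k_{\rm P}\sigma\|\tilde v_{\rm p}-\tilde x_1\|^2$, then conclude via boundedness and Barbalat's lemma. The technical caveats you raise (Caratheodory solutions of the projected dual dynamics, positive definiteness of $S$ for boundedness of $\hat\lambda$) are real but are handled implicitly in the paper exactly as you anticipate.
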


\begin{proof}
Define $S := S_{\rm o} + S_{\rm p}$.
Then, combining (\ref{eqn:3.197}), (\ref{eqn:9.15}) and $v_{\rm o} = - y_{\rm p},\  v_{\rm p} = y_{\rm o}$ yields
\begin{eqnarray}
D^+S \!\!&\!\!\leq\!\!&\!\! 
-\kappa\sigma\|\tilde y_{\rm o}\|^2
-k_{\rm P} \sigma \|\tilde v_{\rm p} - \tilde x_1\|^2.
\label{eqn:9.16}
\end{eqnarray}
This means that both of $\tilde y_{\rm o} = y_{\rm o} - z_{x1}^*$ and 
$\tilde y_{\rm o} - \tilde x_1 = y_{\rm o} - x_1$ belong to class ${\mathcal L_2}$.
Since $S$ is positive definite, all of the state variables
$\hat z_u, \hat \lambda, x$ and $\xi$ belong to ${\mathcal L}_{\infty}$.
From (\ref{eqn:3.4}), $\dot{\tilde{y}}_{\rm o} = -M^{-1}\dot \nu = M\dot{\hat{z}}_u$ is bounded.
Also, (\ref{eqn:2.9}) means that $\dot{\tilde{x}} = \dot x$ is bounded and hence
$\dot y_{\rm o} - \dot x_1$ is bounded.
Thus, invoking Barbalat's lemma, we can prove
$y_{\rm o} - z_{x1}^* \to 0,\ \ y_{\rm o} - x_1 \to 0$,
which means $x_1 \to z_{x1}^*$.
This completes the proof.
\end{proof}
Lyapunov stability of the desirable equilibrium, tuple of $z_u^*, \lambda^*, x^*$ and $\xi^*$,
is also proved in the above proof.

It is to be emphasized that the optimal solution is 
dependent on the unmeasurable disturbance.
Nevertheless, convergence to the solution is guaranteed owing to
the feedback path from physics to optimization.


The above results are obtained assuming that both of $w_{\rm q}$ and $w_{\rm a}$ are constant.
This is likely valid for $w_{\rm a}$ since the ambient temperature is in general
slowly varying.
However, the heat gain $w_{\rm q}$ may contain high frequency components.
To address the issue, we decompose the signal $w_{\rm q}$ into 
the DC components $d_{\rm q}$ and others $\tilde w_{\rm q}$
as $w_{\rm q} = d_{\rm q} + \tilde w_{\rm q}$.
We also assume that $\tilde w_{\rm q}$ belongs to an extended ${\mathcal L}_2$ space \cite{BAW_BK}.
The following corollary then holds, which is proved following
the proof procedure of the well-known passivity theorem
\cite{BAW_BK,hatanaka} and using the fact that the right-hand side of
(\ref{eqn:9.16}) is upper bounded by
$-\frac{\kappa k_{\rm P}\sigma}{\kappa + k_{\rm P}} \|\tilde x_1\|^2$.
\begin{corollary}
\label{cor:1}
Suppose that $\hat \lambda(0)\geq 0$,
$w_{\rm a} \equiv d_{\rm a}$ and $w_{\rm q} = d_{\rm q} + \tilde w_{\rm q}$.
Then, if Assumptions \ref{ass:0} and \ref{ass:1} hold,
the interconnected system (\ref{eqn:3.4}),
(\ref{eqn:99.5}) and $v_{\rm o} = - y_{\rm p},\  v_{\rm p} = y_{\rm o}$ from $\tilde w_{\rm q}$
to $\tilde x_1 = B^{\top}\tilde x = x_1 - z^*_{x1}$ has a finite ${\mathcal L}_2$ gain.
\end{corollary}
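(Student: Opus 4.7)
The plan is to follow the standard $\mathcal{L}_2$-gain template for interconnections of passive and passivity-short systems: extend the physical-side inequality of Lemma \ref{lem:hata5} to include the non-DC part $\tilde w_{\rm q}$, combine it with the unaltered optimization inequality from Lemma \ref{lem:hata7}, and integrate in time.

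First I would redo the storage-function computation behind Lemma \ref{lem:hata12}, now with $w_{\rm q}=d_{\rm q}+\tilde w_{\rm q}$ (keeping $w_{\rm a}\equiv d_{\rm a}$). Since $\tilde w_{\rm q}$ enters (\ref{eqn:99.5a}) additively through $B$, the only modification in $\dot S_{\rm p}$ is an extra cross-term $(\nabla_x S_{\rm p})^{\top}B\tilde w_{\rm q}$; under the identity-weighting of the $x$-block that is natural for the passivity proof, this collapses to $\tilde x_1^{\top}\tilde w_{\rm q}$ because $B^{\top}\tilde x = \tilde x_1$. The perturbed analogue of (\ref{eqn:9.15}) thus reads
\begin{equation*}
\dot S_{\rm p} \leq \tilde y_{\rm p}^{\top}\tilde v_{\rm p} + (1-\kappa\sigma)\|\tilde v_{\rm p}\|^2 - k_{\rm P}\sigma\|\tilde v_{\rm p}-\tilde x_1\|^2 + \tilde x_1^{\top}\tilde w_{\rm q}.
\end{equation*}
The optimization side is unaffected: (\ref{eqn:3.197}) still holds, and under the interconnection $\tilde v_{\rm o}=-\tilde y_{\rm p}$, $\tilde v_{\rm p}=\tilde y_{\rm o}$ the two cross-terms cancel exactly. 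Using the quoted upper bound $-\kappa\sigma\|\tilde y_{\rm o}\|^2 - k_{\rm P}\sigma\|\tilde v_{\rm p}-\tilde x_1\|^2 \leq -\frac{\kappa k_{\rm P}\sigma}{\kappa+k_{\rm P}}\|\tilde x_1\|^2$, I would obtain
\begin{equation*}
D^+ S \leq -\frac{\kappa k_{\rm P}\sigma}{\kappa+k_{\rm P}}\|\tilde x_1\|^2 + \tilde x_1^{\top}\tilde w_{\rm q}.
\end{equation*}

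A Young's-inequality split $\tilde x_1^{\top}\tilde w_{\rm q} \leq \frac{\gamma}{2}\|\tilde x_1\|^2 + \frac{1}{2\gamma}\|\tilde w_{\rm q}\|^2$ with $\gamma$ chosen strictly less than $\frac{2\kappa k_{\rm P}\sigma}{\kappa+k_{\rm P}}$ leaves a strictly negative coefficient on $\|\tilde x_1\|^2$ and a finite multiplier on $\|\tilde w_{\rm q}\|^2$. Integrating from $0$ to any $T>0$ and using $S(T)\geq 0$ yields an inequality of the form $c_1\int_0^T\|\tilde x_1(\tau)\|^2\,d\tau \leq S(0) + c_2\int_0^T\|\tilde w_{\rm q}(\tau)\|^2\,d\tau$ with explicit $c_1,c_2>0$, which is exactly the truncated $\mathcal{L}_2$ bound that defines a finite $\mathcal{L}_2$ gain from $\tilde w_{\rm q}$ to $\tilde x_1$ in the extended sense of \cite{BAW_BK}.

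The main obstacle I expect is verifying that the disturbance cross-term really collapses to $\tilde x_1^{\top}\tilde w_{\rm q}$ rather than touching $\tilde x_2$ or $\tilde \xi$. If the storage function used in Lemma \ref{lem:hata12} weights the $\tilde x$-block by a non-identity matrix $P$, then $\nabla_x S_{\rm p}$ contains $P\tilde x$ and the new cross-term becomes $\tilde x^{\top}PB\tilde w_{\rm q}$; one then needs either $PB=B$ (so that the product still reduces to $\tilde x_1$) or an intermediate Young step that produces $\|\tilde x\|^2$, which must be absorbed using the $-\tilde x^{\top}\bar A\tilde x$ dissipation margin present inside $\dot S_{\rm p}$. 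In either scenario the constants $c_1,c_2$ change but the finite-$\mathcal{L}_2$-gain conclusion is preserved; the scalar case above is what makes the bookkeeping clean.
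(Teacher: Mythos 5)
Your proposal is correct and follows essentially the same route the paper sketches: perturb the physical-side storage inequality by the disturbance cross-term, invoke the quoted bound $-\frac{\kappa k_{\rm P}\sigma}{\kappa+k_{\rm P}}\|\tilde x_1\|^2$ on the right-hand side of (\ref{eqn:9.16}), and finish with the standard Young's-inequality/integration argument of the passivity theorem. The only detail to fix is that the storage function of Lemma \ref{lem:hata12} weights the $\tilde x_1$-block by $M$, so the cross-term is $\tilde x_1^{\top}M\tilde w_{\rm q}$ rather than $\tilde x_1^{\top}\tilde w_{\rm q}$ --- a case you already anticipate and which only changes the constants.
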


We give some remarks on the present architecture.

Figs. \ref{fig:1} and \ref{fig:4} are oriented by theoretical analysis, 
but the implementation does not need to follow the information processing
in the figures.
Indeed, the interconnected system is equivalently transformed into Fig. \ref{fig:6}.
If we let the operations shaded by dark gray be executed
in the high-level controller,
the low-level controller can be 
implemented in a decentralized fashion similarly to the existing systems.


In Fig. \ref{fig:6}, both of the high-level and low-level controller
with the physical dynamics are biproper and hence a problem of algebraic loops can occur. 
This however does not matter in practice since the information transmissions
between high- and low-level processes usually suffer from possibly small delays.
Although the high-level controller itself contains an algebraic loop,
it is easily confirmed that the loop can be solved by direct calculations of the algebraic constraint.

\begin{figure}\centering
\includegraphics[width=7.5cm]{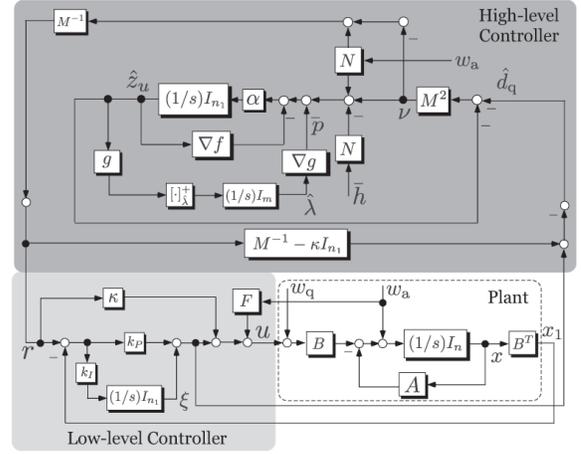}
\caption{Hierarchical control architecture.}
\label{fig:6}
\end{figure}

The transfer function from $w_{\rm q}$ to the disturbance estimate 
$\hat d_{\rm q}$, roughly speaking,
is almost the same as the complementary transfer function and
hence only the low frequency components are provided by the physical dynamics.
This is why $\hat d_{\rm q}$ is regarded 
as an estimate of the DC component of $w_{\rm q}$.
The cutoff frequency of the disturbance can be in principle tuned by $k_{\rm P}$ and $k_{\rm I}$,
but, once a closed-loop system is designed, the cutoff is also automatically decided.
It is however not always a drawback at least qualitatively. 
Actually, even if optimal solutions reflecting much faster disturbance variations are provided,
the physical states cannot respond to variations faster than the bandwidth.

It is a consequence of the internal model control and constant disturbances
that the actual disturbance $w_q$ is correctly estimated.
A control architecture based on a similar concept is presented in Section VII of
Stegink et al. \cite{SPV_16}.
However, it is clear that the problem (\ref{eqn:3.1}) does not meet the structural constraints
assumed in \cite{SPV_16} and hence
the architecture in \cite{SPV_16} cannot be directly applied to our problem.

Zhang et al. \cite{Xuan} present another kind of interconnection between physical and 
optimization dynamics based on a quasi-disturbance feedforward,
where the disturbance is computed by state measurements and their derivatives,
and then fed back to the optimization dynamics.
The differences of the present scheme from \cite{Xuan}
are listed as follows:

The approach of \cite{Xuan} requires the measurements of state variables.
If the states include temperatures of windows and/or walls,
its technological feasibility may be problematic or at least increases the system cost.
On the other hand, our approach needs only $x_1$ which is usually measurable.

Since there is no sensor to measure $\dot x$,
it has to be computed using the difference approximation,
which provides approximation errors.
Meanwhile, the present approach does not need such an approximation.

In \cite{Xuan}, the difference approximation errors together with sensor noises
and high frequency components of the disturbances 
are directly sent to the optimization dynamics, which may cause 
fluctuations for the output and internal variables in the optimization process
unless it is carefully designed in the sense of the noise reduction.
Adding a low-pass filter to the computed disturbance
might eliminate these undesirable factors.
However, the filter is not designed independently of 
 stability of the entire system in the presence of
uncertainties in $\dot x$ and the system model since, in this case, the quasi-feedforward system becomes a feedback system and the filter is included into the loop.
Meanwhile, the noises are automatically rejected by the physical dynamics in our algorithm.



\section{Simulation}


%
%

%

In this section, we demonstrate the presented control architecture 
through simulation.
For this purpose, we build a building on 3D modeling software
SketchUp (Trimble Inc.), which contains three rooms ($n_1 = 3$)
and other 46 zones ($n_2 = 46$) including walls, ceilings, and windows.
The building model is then installed into EnergyPlus \cite{ep}
in order to simulate the evolution of zone temperatures.
Then, using the acquired data, we identify 
the model parameters in (\ref{eqn:1.1}) and (\ref{eqn:1.2}) via BRCM toolbox \cite{toolbox}.

We next specify the optimization problem (\ref{eqn:3.1}). 
All the elements of $h$ are set to $22C^{1/2}{}^{\circ}$C and we take $f(z_u) = 150\|z_u\|^2$.
The constraints are also selected as
$|z_{ui}| \leq 0.61\ i = 1,2,3$ and $\sum_{i=1}^3|z_{ui}| \leq 1.25$,
where $z_{ui}$ is the $i$-th element of $z_u$.
Collecting these constraints, we define the function $g$.
However, since it turns out that directly using $g(z_u)\leq 0$ has a response speed problem in
penalizing the constraint violation in the primal-dual algorithm,
we instead take the constraint $\theta g(z_u) \leq 0$ with $\theta = 15$,
which does not essentially change the optimization problem.


In the simulation, we 
take the feedback gains $k_{\rm P} = 6.0 \times 10^{-2}$ and
$k_{\rm I} = 1.0 \times 10^{-3}$, and $\kappa = 1.0 \times 10^{-3}$,
which are tuned so that the peak gain of $\sigma$-plot from $r$ to $x_1$
is smaller than the well-known criterion.
It is then confirmed that Assumption \ref{ass:1} is satisfied.

\begin{figure}
\begin{minipage}[b]{4.2cm}
\includegraphics[width=4.2cm]{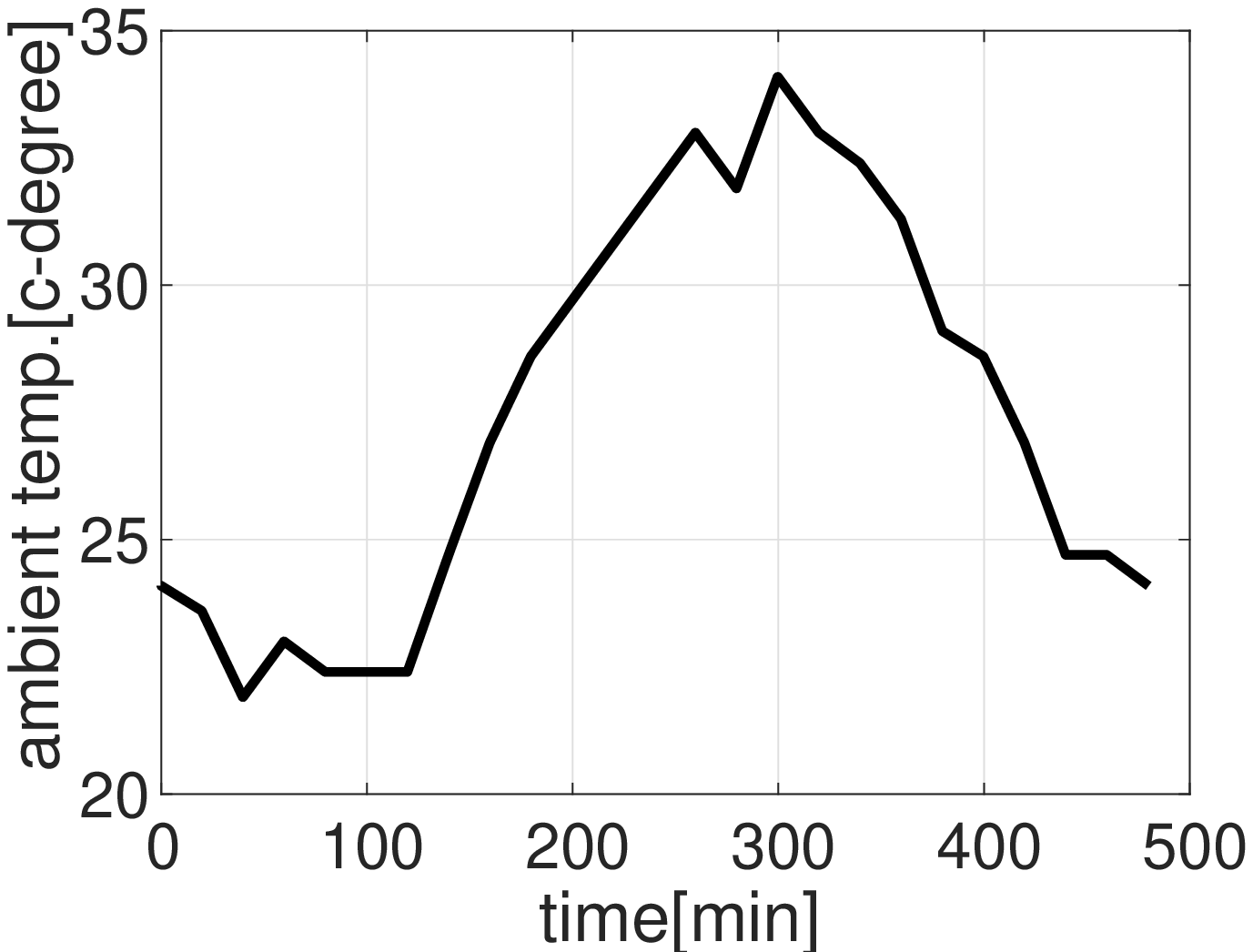}
\end{minipage}
\begin{minipage}[b]{4.2cm}
\includegraphics[width=4.2cm]{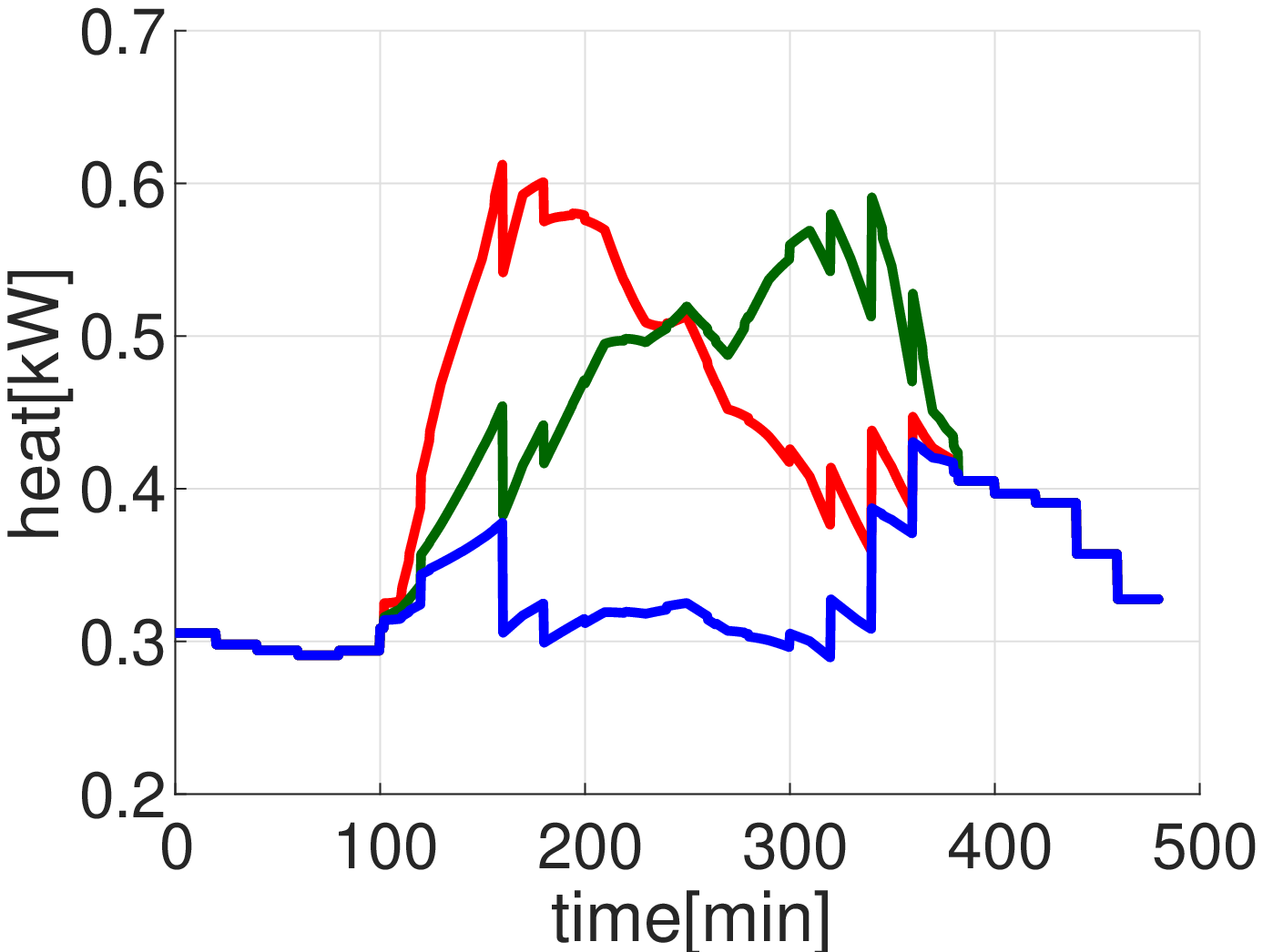}
\end{minipage}
\caption{Ambient temperature (left) and external heats (right).}
\label{fig:15}
\medskip

\begin{minipage}[b]{4.2cm}
\includegraphics[width=4.2cm]{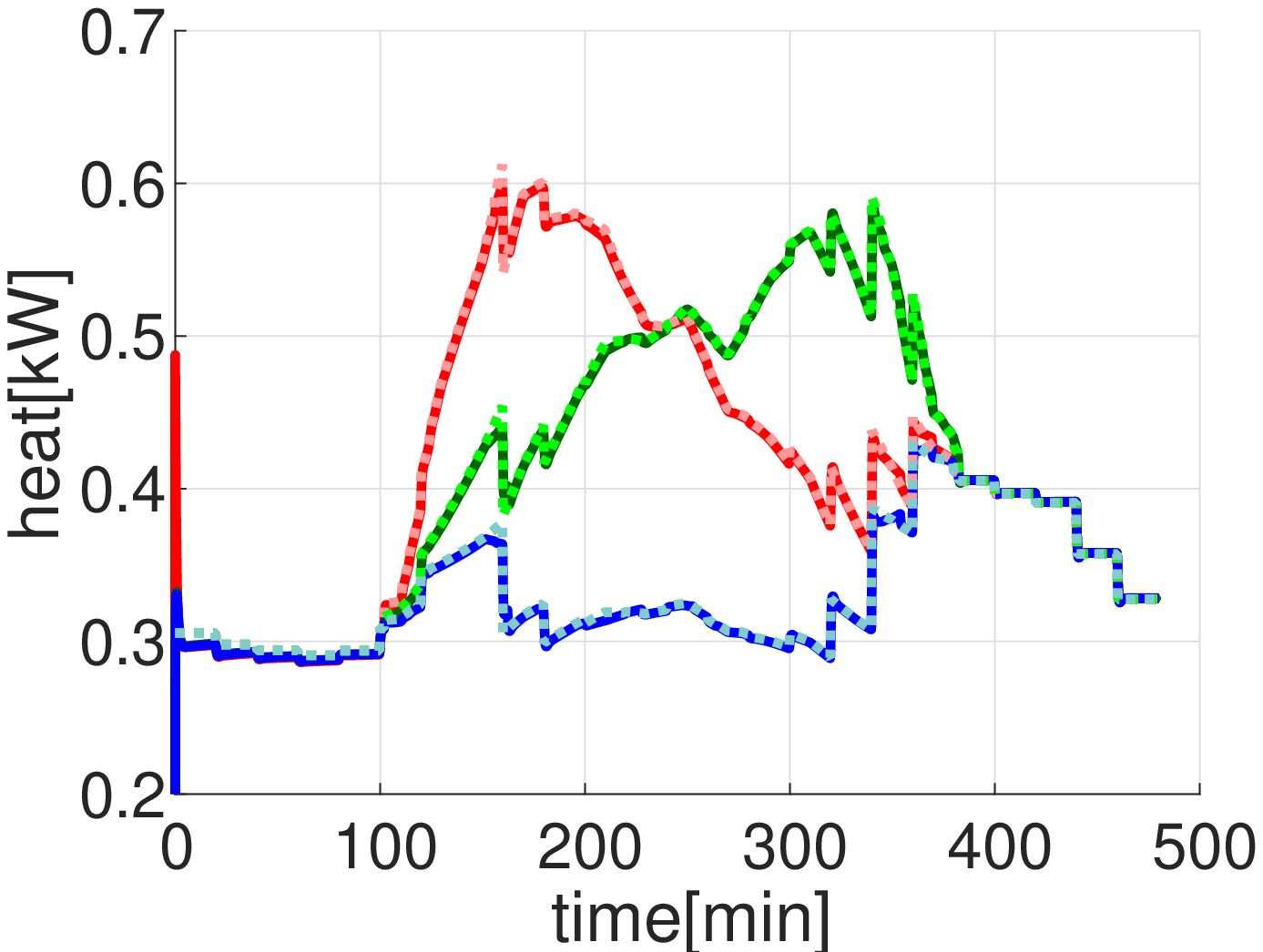}
\end{minipage}
\begin{minipage}[b]{4.2cm}
\includegraphics[width=4.2cm]{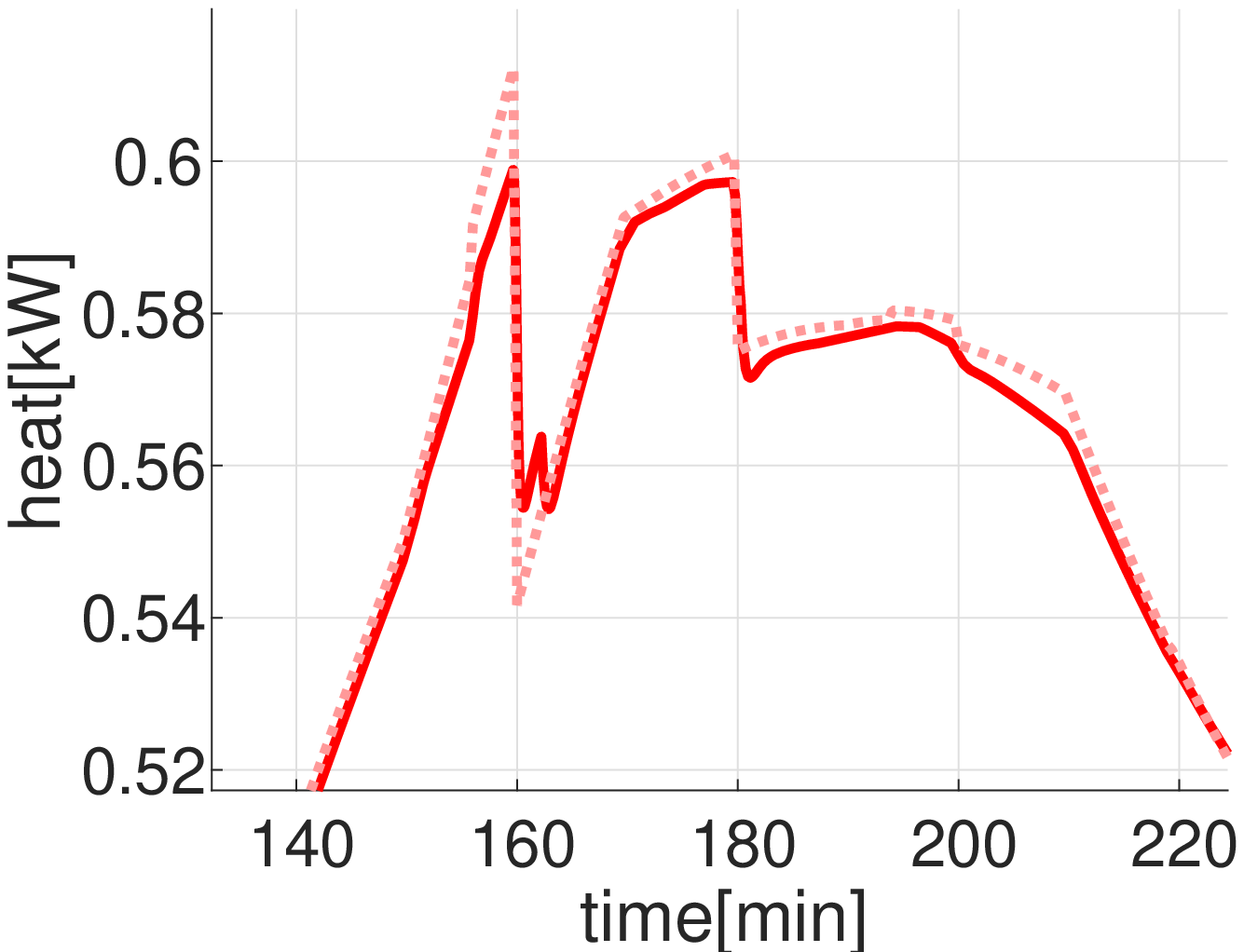}
\end{minipage}
\medskip

\begin{minipage}[b]{4.2cm}
\includegraphics[width=4.2cm]{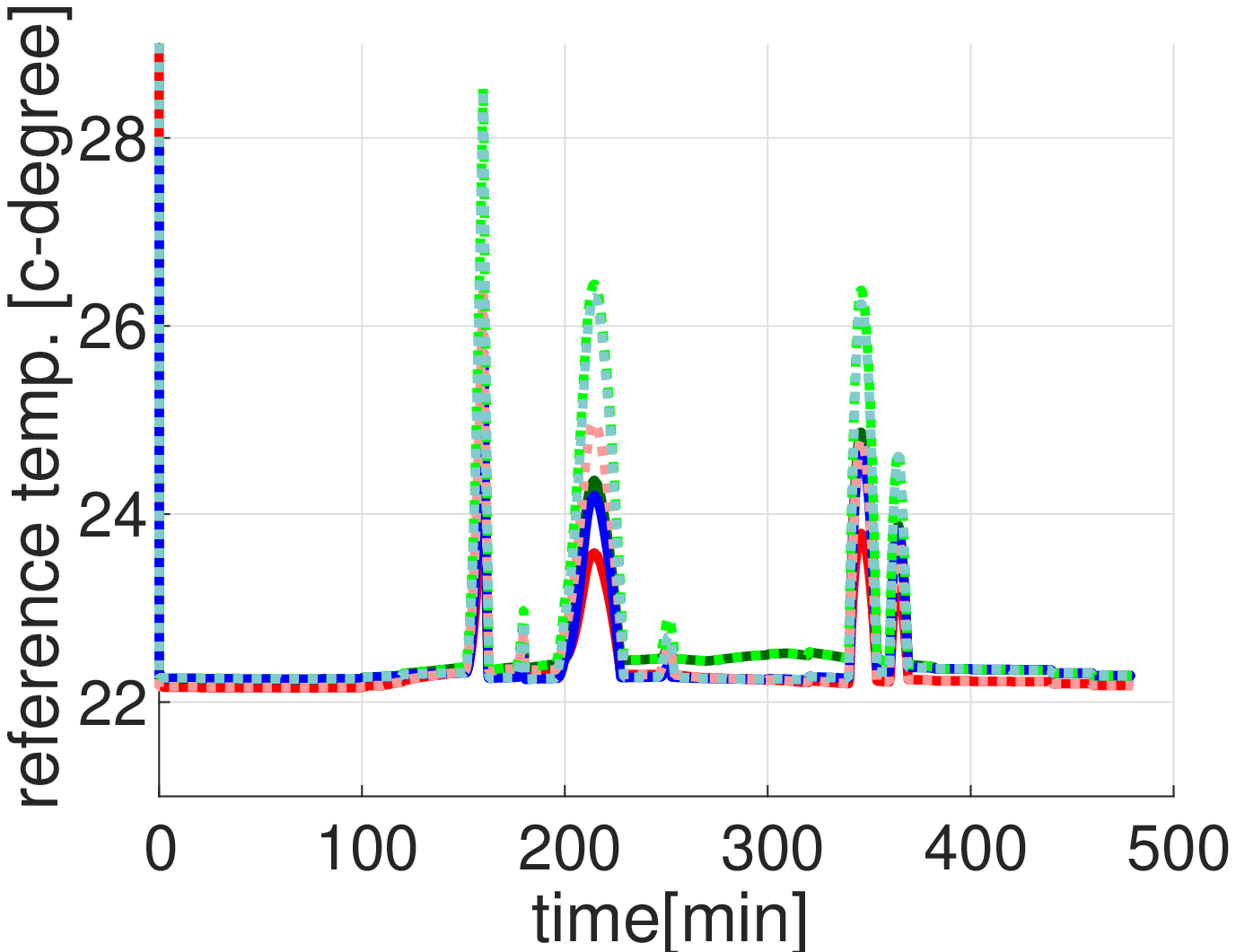}
\end{minipage}
\begin{minipage}[b]{4.2cm}
\includegraphics[width=4.2cm]{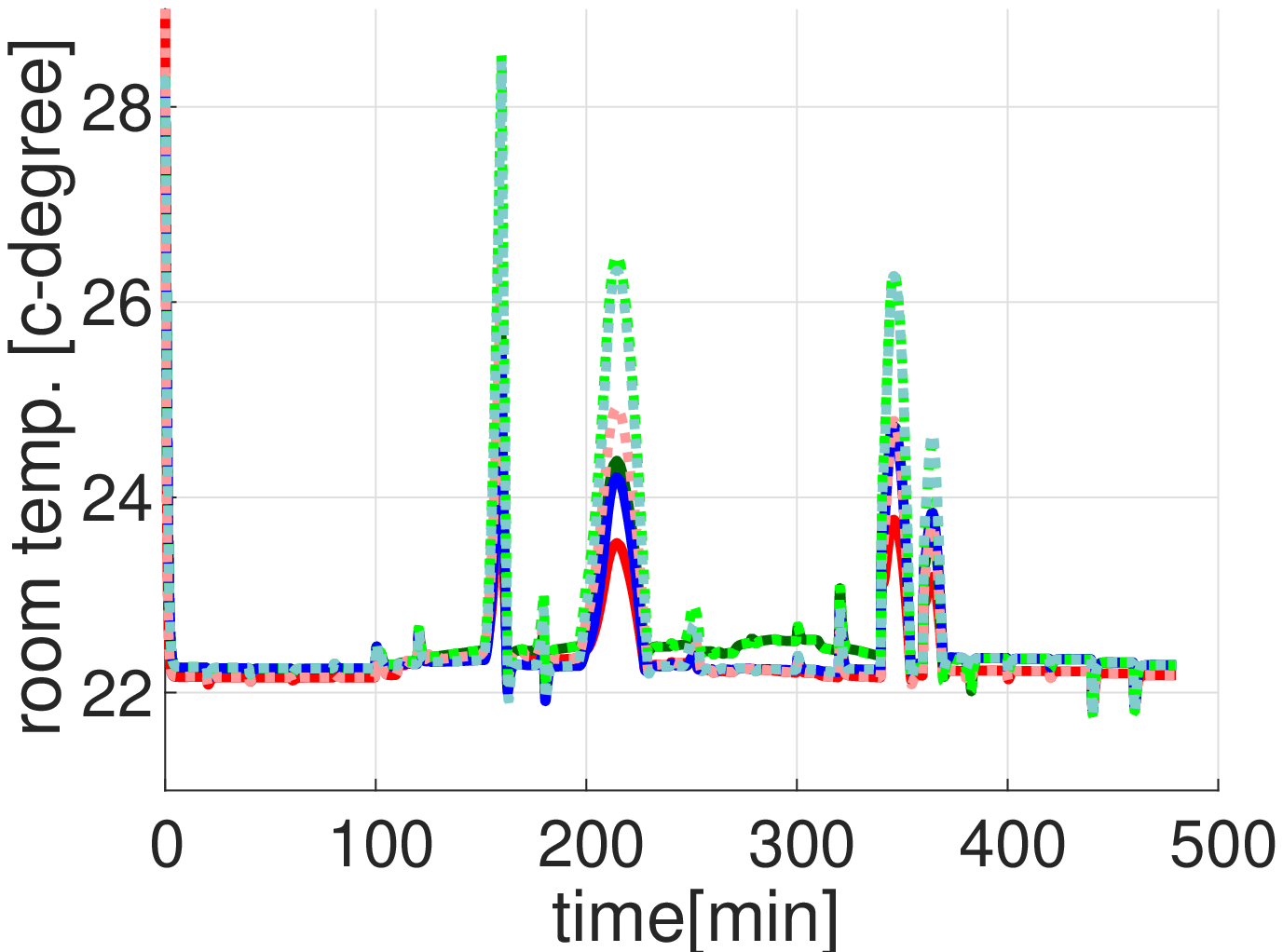}
\end{minipage}
\caption{Time responses of the estimated heat gain (top 2 figures), 
the estimated optimal room temperatures (bottom-left) and actual room temperatures (bottom-right).
In all figures, the solid curves show the responses delivered by the proposed method,
and the dotted ones with light colors are those by the disturbance feedforward scheme.
In the top figures, the dotted lines coincide with the actual heat gain.}
\label{fig:18}  
\end{figure}

In the following simulation, 
we use the disturbance data shown in Fig. \ref{fig:15}.
Here, we compare the results with the ideal case that the disturbance
$w_{\rm q}$ is directly measurable. 
In this case, the feedback path
from the physical dynamics to optimization is not needed and
hence we take the cascade connection from optimization to physics.
It is to be noted that it is hard to implement this in practice.

\begin{figure}[t]
\centering
\begin{minipage}[b]{4.2cm}
\includegraphics[width=4.2cm]{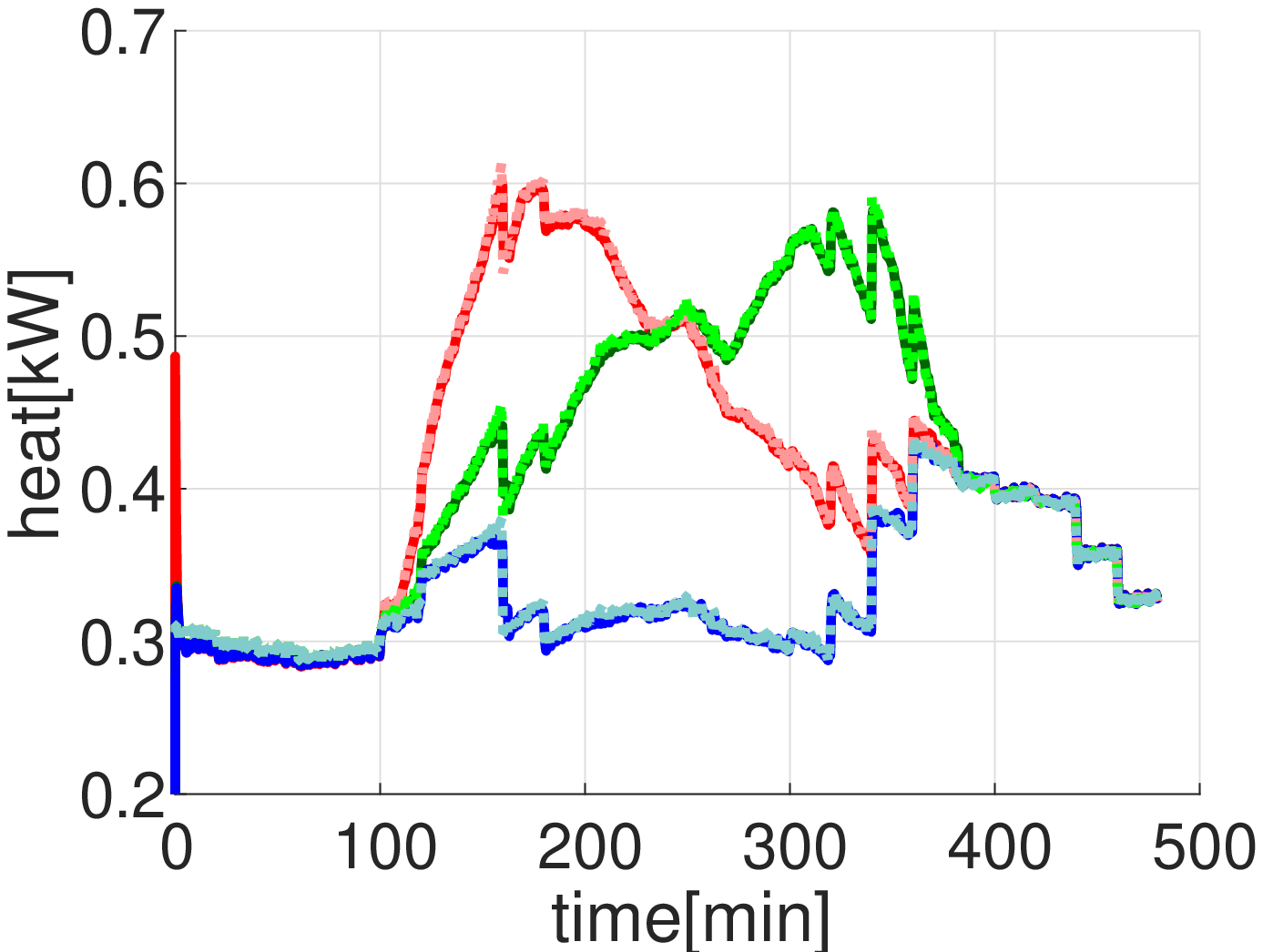}
\end{minipage}
\begin{minipage}[b]{4.2cm}
\includegraphics[width=4.2cm]{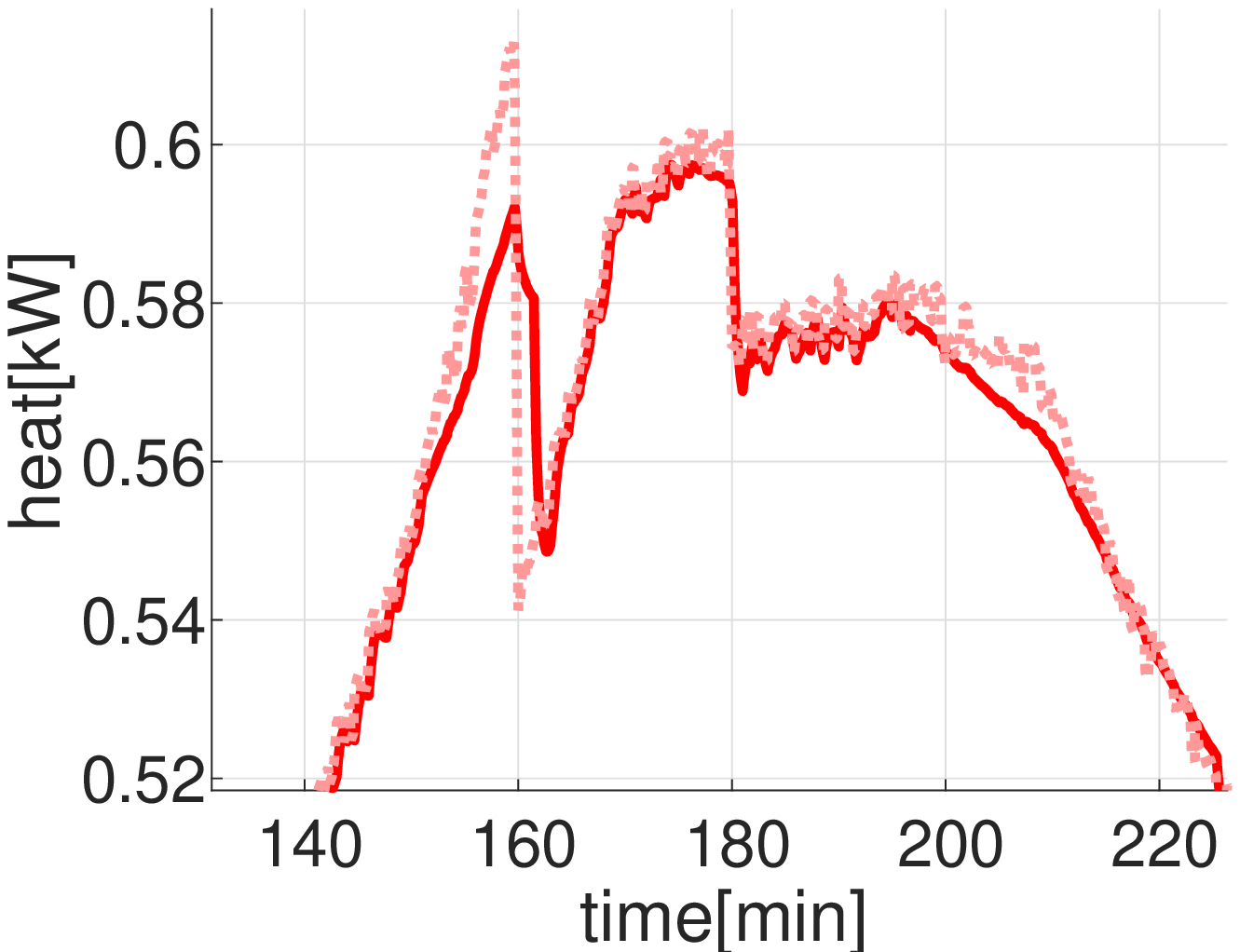}
\end{minipage}
\medskip

\begin{minipage}[b]{4.2cm}
\includegraphics[width=4.2cm]{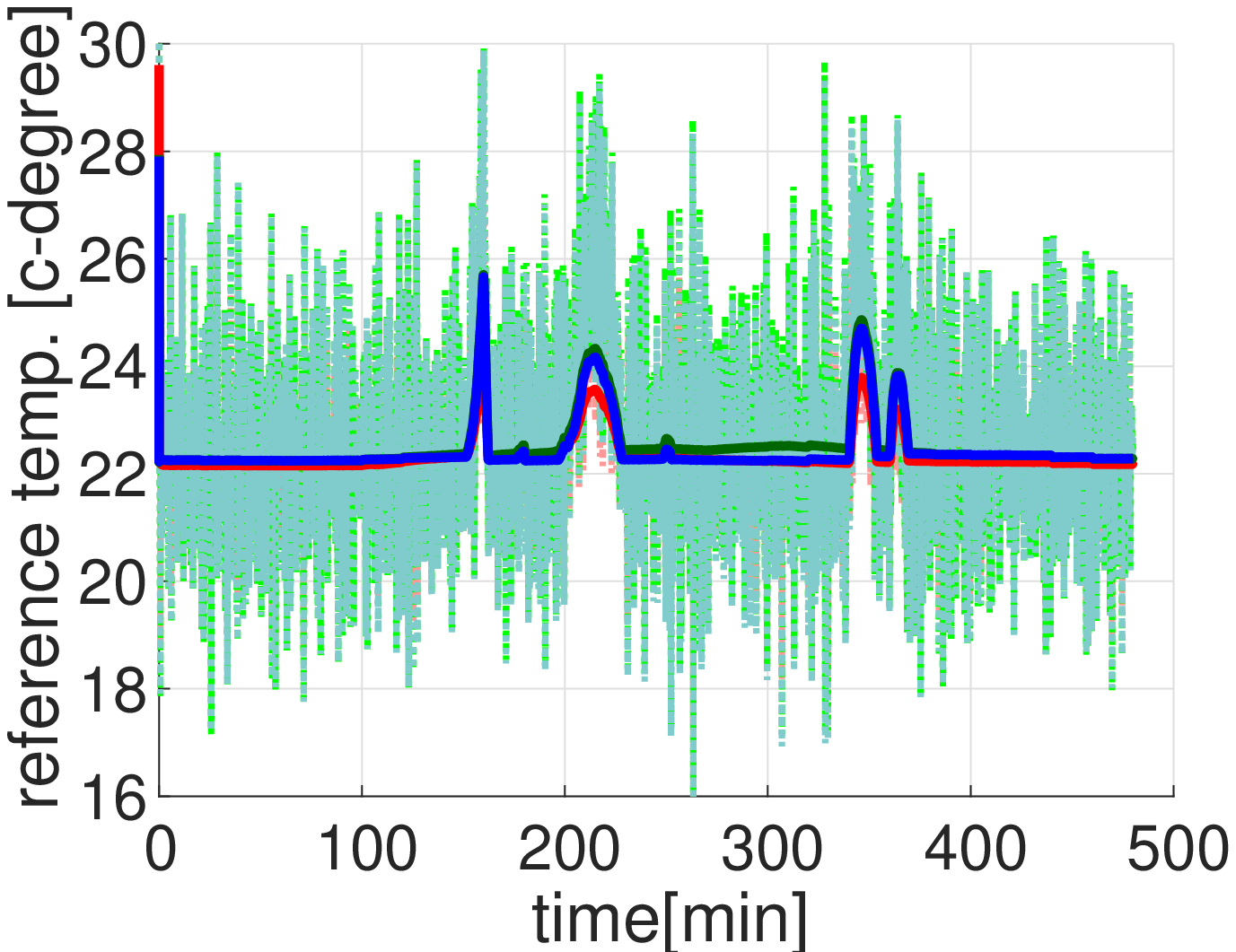}
\end{minipage}
\begin{minipage}[b]{4.2cm}
\includegraphics[width=4.2cm]{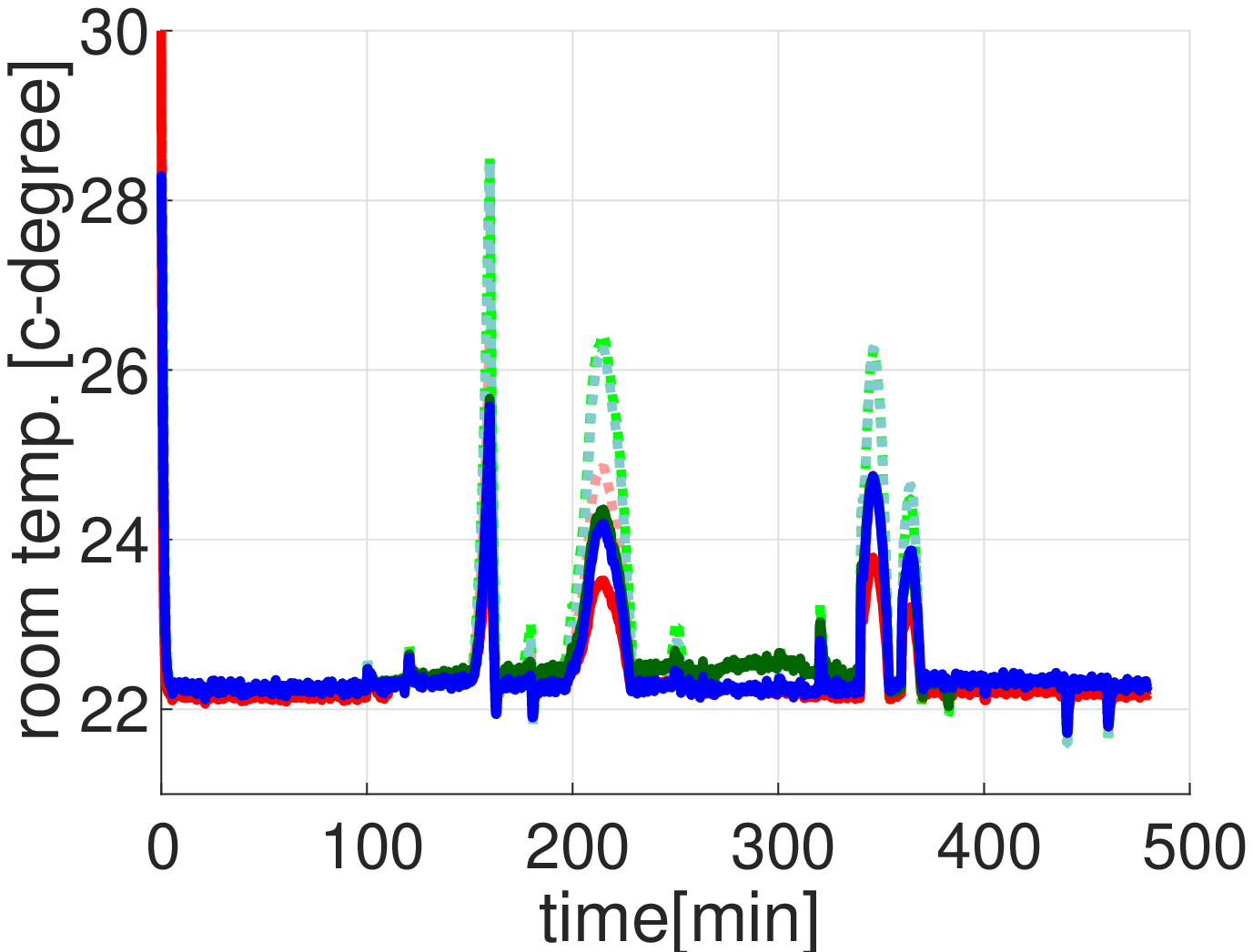}
\end{minipage}
\caption{Time responses in the presence of noises on the heat gain $w_{\rm q}$,
where every line has the same meaning as Fig. \ref{fig:18}.}
\label{fig:17}
\medskip

\begin{minipage}[b]{4.2cm}
\includegraphics[width=4.2cm]{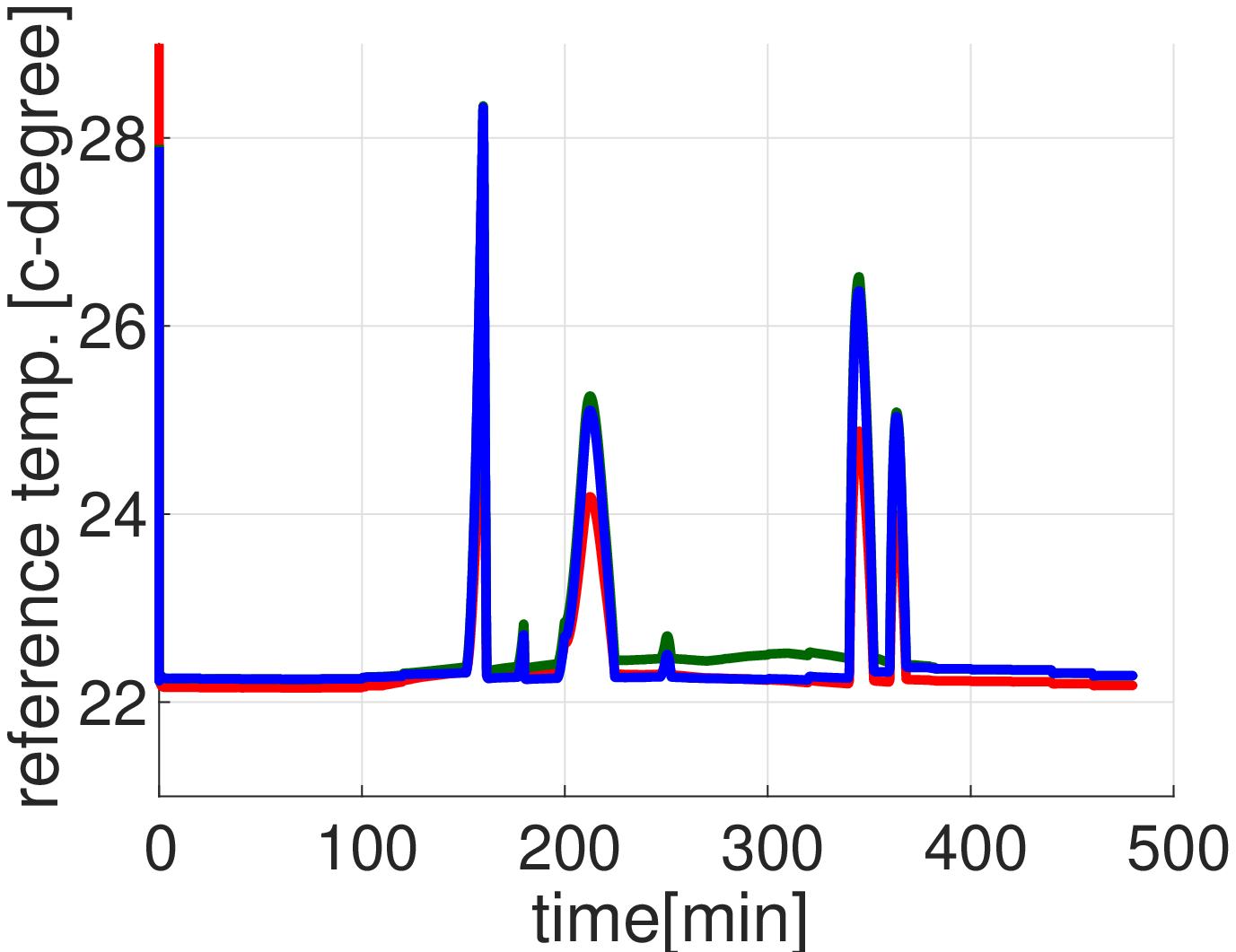}
\end{minipage}
\begin{minipage}[b]{4.2cm}
\includegraphics[width=4.2cm]{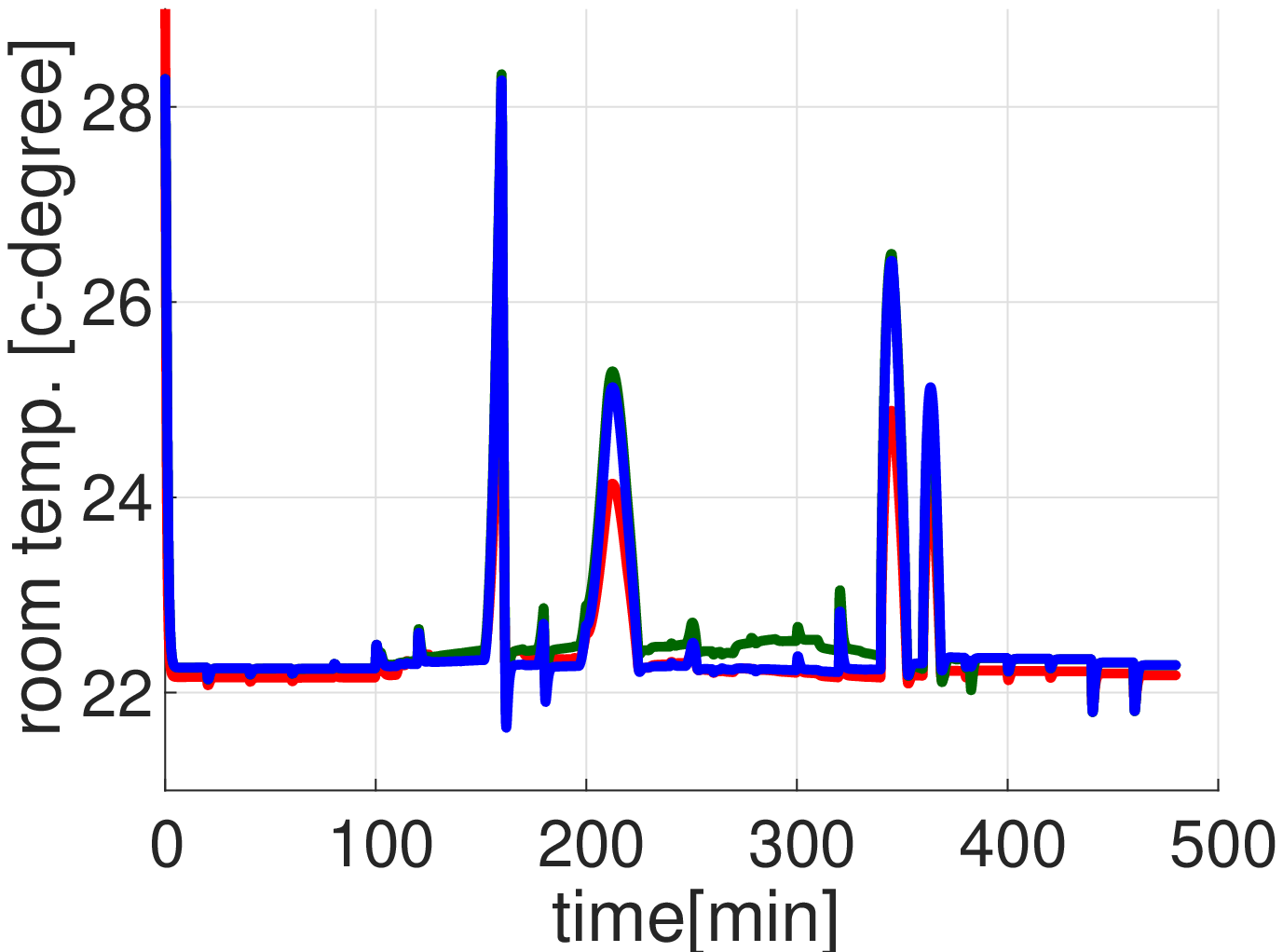}
\end{minipage}
\caption{Time responses of the reference (left) and room temperatures (right)
with the scaling factor $\theta = 50$.}
\vspace{-.2cm}
\label{fig:29}  
\end{figure}

The trajectories of the estimated heat gains,
the estimated optimal room temperatures
and temperatures $T_1, T_2, T_3$
are illustrated by solid curves in Fig. \ref{fig:18}.
The dotted lines with light colors show the trajectories using the above disturbance feedforward scheme.
We see from the top left figures that the presented algorithm almost
correctly estimate the disturbance $w_q$.
It is also observed from the right fine-scale figure that 
high frequency signal components are filtered out in the case of our algorithm.
The trajectories in the bottom-left figure sometimes get far from 22${}^{\circ}$C since
the constraint $\sum_{i=1}^3|z_{ui}| \leq 1.25$ gets active during the periods
due to the high ambient temperature and heat gains.
We see from the bottom figures that
the response of the present method to the constraint violations is slower
than the disturbance feedforward scheme because
high frequency components of the disturbance are filtered out by the physical dynamics.

Let us next show that the high sensitivity of disturbance feedforward 
can cause another problem.
Here, small noises are added to the heat gain $w_{\rm q}$ at every 20s,
whose absolute value is upper bounded by $1.0\times 10^{-3}$.
We then run the above two algorithms.
The resulting responses are illustrated in Fig. \ref{fig:17}.
It is observed from the top-left figure that the noise is filtered out
in the present algorithm, and its effects do not appear on the estimates.
Accordingly, the trajectories of the estimated optimal and actual room temperatures are almost the same as
Fig. \ref{fig:18}.
Meanwhile, the disturbance feedforward approach suffers significant
effects from the noise. 
The trajectories of the bottom-left get smaller than 22${}^{\circ}$C.
Namely, the fluctuations are caused 
by the undesirable over- and undershoots.
Although the trajectories of the actual temperatures get smooth, this behavior  of the reference
is not desirable from an engineering point of view.
Adding a low-pass filter or reducing the gain of the optimization dynamics
would eliminate the oscillations but it spoils the advantage, namely response speed.
It is to be noted that if the disturbance feedforward is implemented using
the recovery technique in \cite{Xuan},
the low-pass filter is not designed independently
of system stability 
as stated in Section \ref{sec:5}.

If the response speed of the present algorithm in
Fig. \ref{fig:18} is still problematic, it can be accelerated by tuning the scaling factor $\theta$.
The results for $\theta = 50$ are shown in Fig. \ref{fig:29}, where it is observed that
almost the same speed as the disturbance feedforward in Fig. \ref{fig:18} is achieved by the present algorithm. 
Simulation for a larger-scale system with more practical settings is left as a future work of this paper.

\section{Conclusion}

In this paper, we presented 
a novel combined optimization and control algorithm
for HVAC control of buildings.
We designed a primal-dual algorithm-based optimization dynamics and
a local physical control system, and proved the system properties related  to passivity. 
We then interconnected the optimization and
physical dynamics, and proved convergence of the room temperatures to
the optimal ones.
We finally demonstrated the present algorithms through simulation.



\appendices

\section{Proof of Lemma \ref{lem:3}}
\label{appendix:1}

\begin{lemma}
\label{lem:hata1}
Consider the system (\ref{eqn:3.4b}) with $\hat \lambda(0) \geq 0$. 
Then, under Assumption \ref{ass:0}, it is passive from $\tilde z_u = \hat z_u - z_u^*$ to $\tilde p = p - p^*$ with $p^* := \nabla g(z^*_u)\lambda^*$.
\end{lemma}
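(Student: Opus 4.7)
The plan is to use the quadratic Lyapunov candidate
$S_\lambda(\hat\lambda) := \tfrac{1}{2}\|\hat\lambda - \lambda^*\|^2$
as a storage function. Because of the projection operator $[\cdot]^+_{\hat\lambda}$, the set $\{\hat\lambda \geq 0\}$ is forward invariant for (\ref{eqn:3.4b}) whenever $\hat\lambda(0)\geq 0$; together with $\lambda^*\geq 0$ from the KKT conditions (\ref{eqn:3.5b}), this makes $S_\lambda$ a valid nonnegative storage function. Differentiating along trajectories gives
$\dot S_\lambda = (\hat\lambda-\lambda^*)^{\top}[g(\hat z_u)]^+_{\hat\lambda}$.

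The first key step is to absorb the projection. For each component $l$, $g_l(\hat z_u) - ([g(\hat z_u)]^+_{\hat\lambda})_l$ is nonzero only when $\hat\lambda_l=0$ and $g_l(\hat z_u)<0$, in which case it equals $g_l(\hat z_u)$ and the corresponding $(\hat\lambda_l - \lambda^*_l) = -\lambda^*_l \leq 0$. Hence the componentwise product is nonnegative, yielding
\[
(\hat\lambda-\lambda^*)^{\top}[g(\hat z_u)]^+_{\hat\lambda} \leq (\hat\lambda-\lambda^*)^{\top}g(\hat z_u).
\]
The second step uses convexity of each $g_l$ (Assumption \ref{ass:0}(ii)). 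Applying the gradient inequality at $\hat z_u$ in the direction $z_u^*$ and weighting by $\hat\lambda_l\geq 0$ gives
\[
\hat\lambda^{\top}\nabla g(\hat z_u)^{\top}(\hat z_u - z_u^*) \geq \hat\lambda^{\top}g(\hat z_u) - \hat\lambda^{\top}g(z_u^*),
\]
and the same inequality applied at $z_u^*$ and weighted by $\lambda^*_l\geq 0$ gives
\[
(\lambda^*)^{\top}\nabla g(z_u^*)^{\top}(\hat z_u - z_u^*) \leq (\lambda^*)^{\top}g(\hat z_u) - (\lambda^*)^{\top}g(z_u^*).
\]
Using $\hat\lambda^{\top}g(z_u^*)\leq 0$ (since $\hat\lambda\geq 0$, $g(z_u^*)\leq 0$) and complementary slackness $(\lambda^*)^{\top}g(z_u^*)=0$ from (\ref{eqn:3.5b}), subtracting the two inequalities yields
\[
\tilde p^{\top}\tilde z_u = [\nabla g(\hat z_u)\hat\lambda - \nabla g(z_u^*)\lambda^*]^{\top}(\hat z_u - z_u^*)
\geq (\hat\lambda-\lambda^*)^{\top}g(\hat z_u).
\]

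Chaining the two displayed bounds gives $\dot S_\lambda \leq \tilde p^{\top}\tilde z_u$; integrating from $0$ to $t$ produces the passivity inequality (\ref{eq:0.1}) with storage function $S_\lambda$. The only subtlety I expect to be nontrivial is the rigorous treatment of the nonsmooth projection $[\cdot]^+_{\hat\lambda}$ when computing $\dot S_\lambda$ (one must work with Dini or Clarke derivatives and argue that $S_\lambda$ decreases along the Carath\'eodory solutions of (\ref{eqn:3.4b})); this is standard in the primal-dual dynamics literature and can be cited from \cite{cortes}. Everything else reduces to the convexity inequality and the KKT conditions, both of which are immediate under Assumption \ref{ass:0}.
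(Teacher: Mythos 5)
Your proof is correct and follows essentially the same route as the paper: the paper uses the identical storage function $U=\frac{1}{2}\|\hat\lambda-\lambda^*\|^2$ and simply cites \cite{hatanaka2} for the derivation of $D^+U\leq\tilde p^{\top}\tilde z_u$, which is exactly the chain of inequalities (projection absorption, convexity of $g$, complementary slackness, and $\hat\lambda^{\top}g(z_u^*)\leq 0$) that you write out explicitly. The only cosmetic remark is that forward invariance of $\{\hat\lambda\geq 0\}$ is needed for the sign arguments in your two key steps rather than for nonnegativity of the storage function, which holds regardless.
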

\begin{proof}
Define the energy function
$U := \frac{1}{2}\|\hat \lambda - \lambda^*\|^2$.
Then, following the same procedure as
\cite{hatanaka2}, we have
\begin{eqnarray}
D^+ U \leq
 (p - p^*)^{\top}(\hat z_u - z^*_u) = \tilde p^{\top} \tilde z_u,
\label{eqn:3.14}
\end{eqnarray}
where the notation $D^+$ represents the upper Dini derivative.
Integrating this in time completes the proof.
\end{proof}

We next consider  (\ref{eqn:3.4a}). 
Now, replace $-M^2(\hat z_u + \hat d_{\rm q}) - p$ by an external input $\mu$ and
consider the system 
\begin{eqnarray}
\dot{\hat{z}}_u =  - \alpha\{N(w_{\rm a} - \bar h) + \nabla f(\hat z_u) - \mu\}.
\label{eqn:3.19a}
\end{eqnarray}
Then, we have the following lemma.

\begin{lemma}
\label{lem:hata2} 
Suppose $w_{\rm a} \equiv d_{\rm a}$.
Then, under Assumption \ref{ass:0}, the system  (\ref{eqn:3.19a}) is passive from $\tilde \mu := \mu - \mu^*$ to $\tilde z_u = \hat z_u - z_u^*$,
where $\mu^* := -M^2(z^*_u + d_{\rm q})-p^*$.
\end{lemma}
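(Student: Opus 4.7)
The plan is to build a simple quadratic storage function for the integrator driven by the gradient of $f$, and exploit convexity of $f$ to absorb the gradient-difference term as a nonnegative dissipation.

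First, I would verify that $\mu^*$ is the correct equilibrium input. Plugging $\hat z_u = z_u^*$, $w_{\rm a}=d_{\rm a}$, and $\mu=\mu^*$ into (\ref{eqn:3.19a}) and demanding $\dot{\hat z}_u=0$ gives $\mu^* = N(d_{\rm a}-\bar h)+\nabla f(z_u^*)$. On the other hand, the KKT condition (\ref{eqn:3.5a}) rewritten reads $-M^2(z_u^*+d_{\rm q})-p^* = N(d_{\rm a}-\bar h)+\nabla f(z_u^*)$, so the definition $\mu^*:=-M^2(z_u^*+d_{\rm q})-p^*$ is consistent with the equilibrium, confirming that $(z_u^*,\mu^*)$ is a valid operating point of (\ref{eqn:3.19a}).

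Second, subtracting the equilibrium equation from (\ref{eqn:3.19a}) and using $w_{\rm a}\equiv d_{\rm a}$ yields the error dynamics
\begin{equation}
\dot{\tilde z}_u = -\alpha\bigl[\nabla f(\hat z_u)-\nabla f(z_u^*)\bigr] + \alpha\tilde\mu.
\nonumber
\end{equation}
I would then propose the storage function $V(\tilde z_u):=\frac{1}{2\alpha}\|\tilde z_u\|^2$, which is clearly positive semidefinite. Differentiating along trajectories gives
\begin{equation}
\dot V = \tilde z_u^{\top}\tilde\mu - \tilde z_u^{\top}\bigl[\nabla f(\hat z_u)-\nabla f(z_u^*)\bigr].
\nonumber
\end{equation}

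Third, by convexity of $f$ (Assumption \ref{ass:0}(i)), the gradient $\nabla f$ is monotone, so $(\hat z_u-z_u^*)^{\top}[\nabla f(\hat z_u)-\nabla f(z_u^*)]\ge 0$. Hence $\dot V \le \tilde z_u^{\top}\tilde\mu$, and integrating in time gives $V(\tilde z_u(t))-V(\tilde z_u(0)) \le \int_0^t \tilde z_u^{\top}(\tau)\tilde\mu(\tau)d\tau$, which is exactly the passivity inequality (\ref{eq:0.1}) for the pair $(\tilde\mu,\tilde z_u)$.

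I do not anticipate a real obstacle here: the argument is an incremental-passivity argument for a gradient flow of a convex function with an external forcing. The only subtlety is making sure the KKT-based definition of $\mu^*$ matches the equilibrium condition of (\ref{eqn:3.19a}), which is resolved in the first step; local Lipschitzness in Assumption \ref{ass:0}(i) guarantees existence and uniqueness of solutions so that the Dini/time derivative of $V$ is well defined along trajectories.
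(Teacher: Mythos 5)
Your proposal is correct and follows essentially the same route as the paper: the same storage function $V=\frac{1}{2\alpha}\|\tilde z_u\|^2$, the same error dynamics obtained by subtracting the KKT stationarity condition (\ref{eqn:3.5a}) from (\ref{eqn:3.19a}), and the same use of gradient monotonicity from convexity of $f$. Your explicit verification that $\mu^*$ matches the equilibrium of (\ref{eqn:3.19a}) is a slightly more careful presentation of the subtraction step the paper performs implicitly, but it is not a different argument.
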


\begin{proof}
Subtracting (\ref{eqn:3.5a}) from (\ref{eqn:3.19a}) under yields
\begin{eqnarray}
\dot{\tilde{z}}_u =  -\alpha (\nabla f(\hat z_u) - \nabla f(z_u^*)) + \alpha \tilde \mu.
\label{eqn:3.9a}
\end{eqnarray}
Now, define 
$V := \frac{1}{2\alpha}\|\tilde z_u\|^2 = \frac{1}{2\alpha}\|\hat z_u - z_u^*\|^2$.
Then, the time derivative of $V$ along the trajectories of (\ref{eqn:3.9a}) is given by
\begin{eqnarray}
\dot V = - (\hat z_u - z_u^*)^{\top}(\nabla f(\hat z_u) - \nabla f(z_u^*))  + \tilde z_u^{\top}\tilde \mu.
\label{eqn:3.16}
\end{eqnarray}
From convexity of $f$, $(\hat z_u - z_u^*)^{\top}(\nabla f(\hat z_u) - \nabla f(z_u^*)) \geq 0$ holds \cite{boyd}.
This completes the proof.
\end{proof}

The system (\ref{eqn:3.4}) is given by interconnecting  
(\ref{eqn:3.4b}) and (\ref{eqn:3.19a}) via $\mu = \nu - p$. 
It is then easy to confirm that $\tilde \mu = \tilde \nu - \tilde p$.

We are now ready to prove Lemma \ref{lem:3}.
Define $S_{\rm o} : = V+U$.
Then, combining (\ref{eqn:3.14}), (\ref{eqn:3.16}) and $\tilde \mu = \tilde \nu - \tilde p$, we have
\begin{eqnarray}
D^+ S_{\rm o} \leq - (\hat z_u - z_u^*)^{\top}(\nabla f(\hat z_u) - \nabla f(z_u^*)) + \tilde z_u^{\top}\tilde \nu.
\label{eqn:3.17}
\end{eqnarray}
Since $\tilde \nu = -M^2(\tilde z_u + \tilde d_{\rm q})$,
it follows
\begin{align}
D^+ S_{\rm o} &\leq
- \tilde \nu^{\top}\tilde d_{\rm q} - (\tilde z_u + \tilde d_{\rm q})^{\top}M^2(\tilde z_u + \tilde d_{\rm q})
\nonumber\\
&
- (\hat z_u - z_u^*)^{\top}(\nabla f(\hat z_u) - \nabla f(z_u^*))\leq -\tilde \nu^{\top}\tilde d_{\rm q}.
\label{eqn:3.18}
\end{align}
This completes the proof.

\section{Proof of Lemma \ref{lem:hata12}}
\label{appendix:2}

\begin{lemma}
\label{ass:2}
Under Assumption \ref{ass:1} and (\ref{eqn:9.1}), the system $(\bar A_3, A_2P^{-1/2})$ is stabilizable and
$(P^{-1/2}MA_2^{\top},\bar A_3)$ is detectable, where $\bar A_3 := -A_3 + A_2MP^{-1}A_2^{\top}$.
\end{lemma}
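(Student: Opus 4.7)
My plan is to verify both claims via the Popov--Belevitch--Hautus (PBH) eigenvector test, which in this setting collapses to an elementary observation. The essential point is that both $P^{-1/2}$ and $M$ are invertible: $M = B^{\top} A^{-1} B$ is positive definite because $A^{-1} > 0$ and $B$ has full column rank, and $P > 0$ by the choice of $\kappa$ enforced in (\ref{eqn:9.1}). Consequently, the PBH rank conditions reduce to conditions involving only $A_2$ and $A_2^{\top}$.

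For the stabilizability half, I would argue by contradiction. Suppose there is a nonzero left eigenvector satisfying $v^{*}\bar A_3 = \lambda v^{*}$ with $\mathrm{Re}(\lambda) \geq 0$ and $v^{*}A_2 P^{-1/2} = 0$; the latter immediately reduces to $v^{*}A_2 = 0$. Substituting this into $\bar A_3 = -A_3 + A_2 M P^{-1} A_2^{\top}$ kills the low-rank correction, leaving $v^{*}\bar A_3 = -v^{*}A_3$. Combining this with $v^{*}\bar A_3 = \lambda v^{*}$ shows that $v$ is a left eigenvector of $A_3$ with eigenvalue $-\bar\lambda$. But $A_3$ is a principal submatrix of the symmetric positive definite matrix $A$, so $A_3 = A_3^{\top} > 0$ and its spectrum is contained in $(0, \infty)$; this forces $\mathrm{Re}(\lambda) < 0$, contradicting the assumption.

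The detectability half is the mirror argument applied to right eigenvectors of $\bar A_3$. If $\bar A_3 v = \lambda v$ with $\mathrm{Re}(\lambda) \geq 0$ and $P^{-1/2}MA_2^{\top}v = 0$, then invertibility of $P^{-1/2}M$ yields $A_2^{\top}v = 0$, whence $\bar A_3 v = -A_3 v$ and therefore $A_3 v = -\lambda v$; positivity of $A_3$ again produces a contradiction.

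I do not anticipate a real obstacle here, since the argument is entirely structural: the ``trouble'' term $A_2 M P^{-1} A_2^{\top}$ in $\bar A_3$ vanishes along any eigenvector annihilated by $A_2$ or $A_2^{\top}$, reducing the analysis to the manifestly stable matrix $-A_3$. The one bookkeeping step worth double-checking is that Assumption \ref{ass:1} and (\ref{eqn:9.1}) are used only through the positive definiteness of $P$, so that the symmetric factors $P^{1/2}$ and $P^{-1/2}$ are well defined; no spectral or KYP-type machinery enters the argument.
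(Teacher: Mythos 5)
Your proof is correct, but it takes a different route from the paper's. The paper proves the lemma constructively: it exhibits explicit gains $\Phi_{\rm s}:= P^{1/2}MP^{-1}A_2^{\top}$ and $\Phi_{\rm d}:= A_2MP^{-1}M^{-1}P^{1/2}$ for which $\bar A_3 - A_2P^{-1/2}\Phi_{\rm s} = -A_3$ and $\bar A_3 - \Phi_{\rm d}P^{-1/2}MA_2^{\top} = -A_3$, so that a stabilizing state feedback and a stabilizing output injection are written down in closed form and stabilizability/detectability follow from the definitions, since $-A_3$ is Hurwitz. You instead run the PBH eigenvector test and observe that any left (resp.\ right) eigenvector of $\bar A_3$ annihilated by the input (resp.\ output) matrix must satisfy $v^{*}A_2 = 0$ (resp.\ $A_2^{\top}v = 0$), which kills the correction term $A_2MP^{-1}A_2^{\top}$ and forces the associated eigenvalue into the spectrum of $-A_3 = -A_3^{\top} < 0$. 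Both arguments rest on the same structural fact --- that removing the low-rank correction from $\bar A_3$ recovers the stable matrix $-A_3$ --- and both correctly use only the invertibility of $P^{1/2}$ and $M$ (guaranteed by (\ref{eqn:9.1}) and by $M = B^{\top}A^{-1}B > 0$) plus the symmetric positive definiteness of the principal submatrix $A_3$ of $A$. The paper's version is shorter and gives the stabilizing gains explicitly (which could be reused elsewhere, e.g.\ in constructing the Riccati solution $\Psi$ in Lemma \ref{lem:hata11}); yours is more transparent about \emph{why} the uncontrollable and unobservable modes are necessarily stable. No gap in either half of your argument.
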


\begin{proof}
Define $\Phi_{\rm s}:= P^{1/2}MP^{-1}A_2^{\top}$ and $\Phi_{\rm d}:= A_2MP^{-1}M^{-1}P^{1/2}$.
Then, 
\begin{align}
\bar A_3 - A_2P^{-1/2}\Phi_{\rm s}=-A_3,\
\bar A_3  - \Phi_{\rm d}P^{-1/2}MA_2^{\top} = -A_3
\nonumber
\end{align}
hold and $-A_3$ is stable. This completes the proof.
\end{proof}

Using Lemma \ref{ass:2},
we next prove the following result.
\begin{lemma}
\label{lem:hata11}
Consider the system (\ref{eqn:99.5a})
with $r \equiv r^*$,
$w_{\rm a} \equiv d_{\rm a}$ and $w_{\rm q} \equiv d_{\rm q}$.
Then, under Assumption \ref{ass:1}, the system is 
passive from $\tilde \zeta$ to 
$\tilde x_1 := B^{\top} \tilde x$ with $\tilde x := x - x^*$.
\end{lemma}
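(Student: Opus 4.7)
The plan is to construct a quadratic storage function for the linear error system and to reduce the dissipation requirement to a Riccati inequality whose feasibility is precisely what Lemma \ref{ass:2} certifies. Since $r$, $w_{\rm a}$, and $w_{\rm q}$ are constant, subtracting the steady-state equation from (\ref{eqn:99.5a}) yields the linear error dynamics
\[
\dot{\tilde x} = -\bar A \tilde x + BM^{-1}\tilde \zeta,\qquad \tilde x_1 = B^{\top}\tilde x.
\]
The natural candidate storage function is $S_{\rm p}(\tilde x)=\tfrac{1}{2}\tilde x^{\top}Q\tilde x$ with $Q=Q^{\top}\geq 0$ to be chosen, whose derivative is $\dot S_{\rm p}=-\tfrac{1}{2}\tilde x^{\top}(Q\bar A+\bar A^{\top}Q)\tilde x + \tilde x^{\top}QBM^{-1}\tilde\zeta$.

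For $\dot S_{\rm p}\leq \tilde x_1^{\top}\tilde\zeta$ to hold for every $\tilde\zeta$, the output-matching condition $QBM^{-1}=B$ (equivalently $QB=BM$) and the dissipation condition $Q\bar A+\bar A^{\top}Q\geq 0$ must both hold. Because $B=[I_{n_1}\ 0]^{\top}$, the matching condition forces $Q$ to be block-diagonal of the form $Q=\diag(M,\,Q_3)$ with $Q_3\geq 0$ remaining free. Partitioning $\bar A$ as in Lemma \ref{ass:2} and using the (1,1)-block identity $P=MA_1+A_1M-2\kappa M>0$ from (\ref{eqn:9.1}), one computes
\[
Q\bar A+\bar A^{\top}Q=\begin{bmatrix}P & MA_2^{\top}+A_2^{\top}Q_3\\ Q_3A_2+A_2M & Q_3A_3+A_3Q_3\end{bmatrix},
\]
and a Schur complement (available because $P>0$) collapses the dissipation condition into the algebraic Riccati inequality
\[
Q_3A_3+A_3Q_3-(Q_3A_2+A_2M)P^{-1}(A_2^{\top}Q_3+MA_2^{\top})\geq 0,
\]
which, after regrouping via $\bar A_3=-A_3+A_2MP^{-1}A_2^{\top}$, is a standard Riccati-type inequality on the triple $(\bar A_3,\,A_2P^{-1/2},\,P^{-1/2}MA_2^{\top})$.

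The remaining task is to exhibit $Q_3\geq 0$ solving this inequality. This is exactly where Lemma \ref{ass:2} is used: the stabilizability of $(\bar A_3,A_2P^{-1/2})$ together with the detectability of $(P^{-1/2}MA_2^{\top},\bar A_3)$ are the standard Riccati / KYP hypotheses that guarantee a positive semidefinite solution exists. With such a $Q_3$ in hand, integrating $\dot S_{\rm p}\leq \tilde x_1^{\top}\tilde\zeta$ over $[0,t]$ yields the passivity inequality and finishes the proof.

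The main obstacle is precisely this last existence question. Unlike the optimization-side lemmas, where $\dot S$ is dominated by an inner product almost by inspection, here $\bar A_3$ need not be Hurwitz: the coupling term $A_2MP^{-1}A_2^{\top}$ can overwhelm the stable $-A_3$, so naive ansatzes such as $Q_3=\alpha I$ fail. The whole purpose of the $\kappa$-feedforward in the controller (\ref{eqn:2.8}) and the specific form (\ref{eqn:9.1}) of $P$ is to ensure that the reduced system admits the stabilizing and injecting maps exhibited in Lemma \ref{ass:2}, placing the Riccati inequality within the scope of classical existence theorems.
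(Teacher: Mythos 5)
Your proposal is correct and follows essentially the same route as the paper: the same block-diagonal storage matrix $\diag(M,\Psi)$ (which you derive from the output-matching condition $QB=BM$ rather than positing it), the same Schur-complement reduction of the dissipation condition to a Riccati inequality in the lower-right block, and the same appeal to the stabilizability/detectability properties of Lemma \ref{ass:2} to guarantee a positive semidefinite solution. The only cosmetic difference is that the paper works with the strict inequality $\bar\Psi\bar A+\bar A\bar\Psi>0$ while you only require $\geq 0$, which suffices for passivity.
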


\begin{proof}
We first formulate the error system
\begin{subequations}
\label{eqn:99.7}
\begin{eqnarray}
\hspace{-.7cm}
\dot{\tilde{x}} \!\!&\!\!=\!\!&\!\!  
-\bar A \tilde x + BM^{-1}\tilde \zeta
\label{eqn:99.7a}\\
\hspace{-.7cm}
\dot{\tilde{\xi}} \!\!&\!\!=\!\!&\!\! k_{\rm I}(\tilde r - B\tilde x),\
\tilde\zeta =
\bar k_{\rm P} M (\tilde r  - B^{\top} \tilde x) 
+ M\tilde \xi
\label{eqn:99.7b}
\end{eqnarray}
\end{subequations}
where $\tilde \xi := \xi - \xi^*$.
Take a positive definite matrix $\Psi \in \R^{n_2\times n_2}$ and define
$\bar \Psi := \begin{bmatrix}
M&0\\
0&\Psi
\end{bmatrix}\in \R^{n\times n}$.
Then, by calculation, we have
\begin{eqnarray}
\bar \Psi \bar A + \bar A \bar \Psi = 
\begin{bmatrix}
P&MA_2^{\top}+A_2^{\top}\Psi \\
\Psi A_2+A_2 M&\Psi A_3 + A_3\Psi
\end{bmatrix}.
\label{eqn:9.8}
\end{eqnarray}
From Schur complement, under 
Assumption \ref{ass:1}, $\bar \Psi \bar A + \bar A \bar \Psi > 0$ is equivalent to the following Riccati inequality.
\begin{eqnarray}
 - \bar A_3\Psi -\Psi \bar A_3^{\top}
+ \Psi A_2 P^{-1}A_2^{\top}\Psi + A_2 M P^{-1} MA_2^{\top} < 0
\label{eqn:99.9}
\end{eqnarray}
A positive semi-definite solution $\Psi$ to (\ref{eqn:99.9}) is shown to exist
from Lemma \ref{ass:2}.
Now, define an energy function $S_x := \frac{1}{2}\tilde x^{\top}\bar \Psi \tilde x$
for the solution $\Psi$ to (\ref{eqn:99.9}).
Then, the time derivative of $S_x$ along the trajectories of (\ref{eqn:99.7a})
is given by
\begin{eqnarray}
\dot S_x \!\!&\!\!=\!\!&\!\! 
- \frac{1}{2}\tilde x (\bar \Psi \bar A + \bar A \bar \Psi) \tilde x 
+ \tilde x^{\top} \bar \Psi BM^{-1}\tilde \zeta
\nonumber\\
\!\!&\!\!\leq\!\!&\!\! 
\tilde x^{\top} B \tilde \zeta = (B^{\top}\tilde x)\tilde \zeta = \tilde x_1^{\top}\tilde \zeta.
\label{eqn:9.10}
\end{eqnarray}
This completes the proof.
\end{proof}

We are now ready to prove Lemma \ref{lem:hata12}.
Replace $\tilde r - B^{\top}\tilde x$ in  (\ref{eqn:99.7b}) by $\tilde e$ as
\begin{eqnarray}
\dot{\tilde{\xi}} = k_{\rm I}\tilde e,\
\tilde\zeta =
\bar k_{\rm P} M \tilde e
+ M\tilde \xi.
\label{eqn:9.12}
\end{eqnarray}
Define $S_{\xi} := \frac{1}{2k_{\rm I}}\tilde \xi^{\top} M\tilde \xi$.
Then, the time derivative of $S_{\xi}$ along the trajectories of 
(\ref{eqn:9.12}) is given as
\begin{eqnarray}
\dot S_{\xi} =  \tilde \xi^{\top}M\tilde e
= (\tilde \zeta - \bar k_{\rm P} M \tilde e)^{\top}\tilde e
=\tilde \zeta^{\top}\tilde e - k_{\rm P} \tilde e^{\top}M\tilde e. 
\label{eqn:99.13}
\end{eqnarray}
Define $S_{\rm p} := S_x + S_{\xi}$.
Then, from (\ref{eqn:9.10}) and (\ref{eqn:99.13}),
we have
\begin{eqnarray}
\dot S_{\rm p} \!\!&\!\!\leq\!\!&\!\!  
\tilde \zeta^{\top}\tilde r -  k_{\rm P} (\tilde r - \tilde x_1)^{\top}M(\tilde r - \tilde x_1)
\leq \tilde \zeta^{\top}\tilde r. 
\label{eqn:9.14}
\end{eqnarray}
This completes the proof.

\end{document}